\newcommand{\cvc}{\textsc{cvc}{\small 4}\xspace}
\newcommand{\ziii}{\textsc{z}{\small 3}\xspace}
\newcommand{\teq}{\approx}
\newcommand{\I}{\mathcal{I}}
\newcommand{\mods}{\mathbf{I}}
\newcommand{\lan}{\mathbf{L}}
\newcommand{\qlan}[1]{\mathcal{Q}( #1 )}
\newcommand{\modsof}[2]{{\llbracket {#1} \rrbracket}_{#2}}
\newcommand{\con}[1]{\mathsf{#1}}
\newcommand{\Int}{\con{Int}}
\newcommand{\Real}{\con{Real}}
\newcommand{\ite}{\con{ite}}
\newcommand{\tra}{\ensuremath{\mathrm{RA}}\xspace}
\newcommand{\tia}{\ensuremath{\mathrm{IA}}\xspace}
\newcommand{\lra}{\ensuremath{\mathrm{LRA}}\xspace}
\newcommand{\lia}{\ensuremath{\mathrm{LIA}}\xspace}
\newcommand{\larel}{\bowtie}
\newtheorem{thm}{Theorem}
\newtheorem{lem}{Lemma}
\newtheorem{defn}{Definition}
\newcommand{\procslra}{\mathcal{S}_{\lra}}
\newcommand{\procslrah}{S_R}
\newcommand{\procslrahh}{S_{R0}}
\newcommand{\procsslra}{\mathcal{S}_{S\lra}}
\newcommand{\procslia}{\mathcal{S}_{\lia}}
\newcommand{\procsliah}{S_I}
\newcommand{\procsliahh}{S_{I0}}
\newcommand{\opdivd}{\small\mathsf{div}}
\newcommand{\opdivu}{\small\mathsf{div}^{+}}
\newcommand{\opmod}{\xspace \mathsf{mod} \xspace}
\newcommand{\oplcm}{\xspace \mathsf{lcm} \xspace}
\newcommand{\opdivides}{\xspace \mid \xspace}
\newcommand{\smtsolve}{\small\mathsf{SMTQI}}
\newcommand{\Set}[1]{\left\{#1\right\}}
\newcommand{\parens}[1]{\left(#1\right)}
\newcommand{\bigand}{\bigand}
\newcommand{\fvars}{FV}
\newcommand{\normalize}[1]{{#1}\!\!\downarrow}
\DeclareMathOperator{\transform}{\ \leadsto \ }
\begin{document}
\sloppy

\mainmatter
\newcommand{\mytitle}{An Instantiation-Based Approach for Solving Quantified Linear Arithmetic}
\title{\mytitle}
\titlerunning{\mytitle}
\author{Andrew Reynolds\inst{1} \and Tim King\inst{2} \and Viktor Kuncak\inst{1}}

\authorrunning{Reynolds, King, Kuncak}

\institute{
  {\'E}cole Polytechnique F{\'e}d{\'e}rale de Lausanne (EPFL), Switzerland \and
  Verimag
}
\maketitle

\begin{abstract}
This paper presents a framework to derive instantiation-based decision procedures for satisfiability 
of quantified formulas in first-order theories, including its correctness, implementation, and evaluation.
Using this framework we derive 
decision procedures for linear real arithmetic ($\lra$) and linear integer arithmetic ($\lia$) formulas
with one quantifier alternation.
Our procedure can be integrated into the solving architecture used by typical SMT solvers.
Experimental results on standardized benchmarks from model checking, static analysis, and synthesis show that our implementation of the procedure in the SMT solver \cvc
outperforms existing tools for quantified linear arithmetic.
\end{abstract}

\section{Introduction}


Among the biggest challenges in automated reasoning is
efficient support for quantifiers in the presence of
background theories.  Quantifiers enable direct encoding of a number of problems of interest, including
synthesis of software fragments from specifications
\cite{KuncakETAL10CompleteFunctionalSynthesis,ReynoldsDKBT15Cav,DBLP:conf/issac/SturmT11},
construction of transfer functions for program analysis
\cite{DBLP:conf/popl/Monniaux09}, invariant inference
\cite{DBLP:conf/pldi/GrebenshchikovLPR12,DBLP:conf/cade/BjornerMR12},
as well as analysis of properties that go beyond safety
\cite{DBLP:conf/cav/BeyenePR13,DBLP:conf/popl/BeyeneCPR14}.

The most commonly used complete method for deciding constraints over quantified theories
is \emph{quantifier elimination} \cite[Section
2.7]{Hodges93ModelTheory}. Quantifier elimination algorithms typically solve a more general problem,
of transforming arbitrary quantified formula with free variables into a theory-equivalent formula 
with no quantifiers.
However, depending on the particular variant of the language of constraints, 
performing actual 
quantifier elimination can have worse complexity than
the decision problem \cite{Berman80ComplexityLogicalTheories},
in part because it is required to give an answer on any formula, and the smallest formula resulting from quantifier
elimination can be very large \cite{Weispfenning1997}. When
the goal is to decide the satisfiability of quantified
constraints, quantifier elimination may be doing unnecessary work.
More importantly, procedures based on quantifier elimination often do not handle the underlying ground constraints
in the most efficient way.
Thus, quantifier elimination tends to be prohibitively expensive in practice.
Recent work involving quantifier elimination~\cite{Monniaux10QuantifierEliminationLazyModelEnumeration,Bjoerner10LinearQuantifierEliminationAsAbstractDecision}
has been motivated by avoiding worst-case performance
by effectively computing an equisatisfiable set of ground formulas in a lazy fashion.

In the broader scope of automated theorem proving,
it is often important to reason about formulas involving multiple theories, 
each of which may or may not support quantifier elimination.
In practice, the goal is to obtain a framework for handling quantified formulas that is 
both complete for formulas belonging to decidable logics,
and empirically effective when completeness guarantees are not known.
To this end, modern SMT solvers most commonly use heuristic instantiation-based approaches~\cite{Detlefs03simplify:a},
which are incomplete but work well in practice for undecidable fragments of first-order logic.

Thus, our motivation is to capitalize 
both on recent advances in specialized techniques for quantified linear arithmetic~\cite{Bjoerner10LinearQuantifierEliminationAsAbstractDecision,DBLP:conf/cade/PhanBM12,komuravelli2014smt,bjornerplaying},
and recent advances in instantiation-based theorem proving for first-order logic~\cite{ganzinger2003new,MouraBjoerner07EfficientEmatchingSmtSolvers,reynolds14quant_fmcad}.
This paper seeks to bridge the gap between these two lines of research by
introducing an approach for establishing the satisfiability of formulas in quantified linear arithmetic 
based on a new \emph{quantifier instantiation} framework.
The use of quantifier instantiation for this task is motivated by the following.
\begin{itemize}
\item 
Procedures based on lazy quantifier instantiation
typically establish satisfiability much faster 
than their theoretical complexity.
\item 
Using quantifier instantiation for decidable fragments 
enables a uniform integration and composition 
with existing instantiation-based techniques~\cite{Detlefs03simplify:a,MouraBjoerner07EfficientEmatchingSmtSolvers,reynolds14quant_fmcad},
which are widely used by modern SMT solvers.
\item
An important class of synthesis problems can be expressed as
quantified formulas with one quantifier alternation.
As shown in~\cite{ReynoldsDKBT15Cav}, solutions for these problems
can be extracted from an unsatisfiable core of quantifier instantiations.
\end{itemize}

\sparagraph{Related Work}
Quantifier elimination has been used to, e.g., show decidability and classification of boolean
algebras \cite{Skolem19Untersuchungen,Tarski49ArithmeticalClassesTypesBooleanAlgebras},
Presburger arithmetic
\cite{Presburger29UeberVollstaendigkeitSystemsAritmethikZahlen},
decidability of products 
\cite{Mostowski52DirectProductsTheories,
  FefermanVaught59FirstOrderPropertiesProductsAlgebraicSystems},
\cite[Chapter 12]{Malcev71MetamathematicsAlgebraicSystems},
and algebraically closed fields
\cite{Tarski49ArithmeticalClassesTypesAlgebraicallyClosed}.
The original result on decidability of Presburger arithmetic
is by Presburger
\cite{Presburger29UeberVollstaendigkeitSystemsAritmethikZahlen}.
The space bound for Presburger arithmetic
was shown in
\cite{FerranteRackoff79ComputationalComplexityLogicalTheories}.
The matching lower and upper bounds for Presburger arithmetic were shown 
in \cite{Berman80ComplexityLogicalTheories},
see also \cite[Lecture 24]{Kozen06TheoryComputation}.
An analysis parameterized by the number of quantifier alternations
is presented in
\cite{ReddyLoveland78PresburgerBoundedAlternation}.
A mechanically verified quantifier elimination algorithm was developed
by Nipkow \cite{nipkow2008linear}.

An approach for lazy quantifier elimination for linear real arithmetic
was developed by Monniaux~\cite{Monniaux10QuantifierEliminationLazyModelEnumeration}.
Integration of linear quantifier elimination into the solving algorithm used by SMT solvers was developed in~\cite{Bjoerner10LinearQuantifierEliminationAsAbstractDecision},
though the presented integration is not model driven.
A lazy approach for quantifier elimination, which relies on an operation called model-based projection, has been developed
in the context of SMT-based model checking~\cite{komuravelli2014smt},
and can be used for extracting skolem functions for simulation synthesis~\cite{fedyukovichautomated}.
A recent approach for quantified formulas with arbitrary alternations has been developed by Bjorner~\cite{bjornerplaying} 
for several background theories, which is not based on instantiation.
The most widely used techniques for quantifier instantiation in SMT were developed in~\cite{Detlefs03simplify:a},
and later in~\cite{MouraBjoerner07EfficientEmatchingSmtSolvers,GeBarrettTinelli07SolvingQuantifiedVerificationConditionsUsingSatisfiability},
which primarily focused on uninterpreted functions.
%
Our approach for quantified linear arithmetic instantiates quantified formulas based on a lazy stream of candidate models,
terminating when either it finds a finite set of instances are unsatisfiable, or discovers that the original formula is satisfiable.
Other approaches in this spirit have been used to decide essentially uninterpreted fragment~\cite{GeDeM-CAV-09},
and, more generally, theories having a locality property~\cite{Jacobs09,bansaldeciding}; these works do not directly apply to quantified linear arithmetic.
A recent approach for quantified formulas with one quantifier alternation has been developed in the SMT solver Yices~\cite{dutertresolving},
which does not treat linear integer arithmetic.
The present paper builds upon our previous work for solving synthesis conjectures
using quantifier instantiation in SMT~\cite{ReynoldsDKBT15Cav},
where an approach for quantified linear arithmetic was described without a specific method for selecting instances and without 
completeness guarantees.
While the present paper focuses on linear arithmetic, where it outperforms existing approaches, we expect the presented framework to be relevant
for other quantified theories. 
\begin{longv}
Among the examples of further decidable quantified constraints are
quantified theories of
term algebras \cite[Chapter
23]{Malcev71MetamathematicsAlgebraicSystems},
\cite{Maher88CompleteAxiomatizationsAlgebrasTrees,
  SturmWeispfenning02QuantifierEliminationTermAlgebras}
and their extensions 
\cite{ComonDelor94EquationalFormulaeMembershipConstraints,
RybinaVoronkov01DecisionProcedureTermAlgebrasQueues,
KuncakRinard03StructuralSubtypingNonRecursiveTypesDecidable},
feature trees 
\cite{Backofen95CompleteAxiomatizationTheoryFeatureArityConstraints,
Treinen97FeatureTreesArbitraryStructures}, and
monadic second-order theories 
\cite{Walukiewicz02MonadicSecondOrderLogicTreeLikeStructures}.
\end{longv}
\begin{shortv}
Among the examples of non-trivial decidable quantified constraints are
algebraic data types (term algebras) \cite[Chapter
23]{Malcev71MetamathematicsAlgebraicSystems},
\cite{Maher88CompleteAxiomatizationsAlgebrasTrees,
  SturmWeispfenning02QuantifierEliminationTermAlgebras}.
\end{shortv}

\sparagraph{Contributions}
This paper makes the following contributions.
\begin{shortv}
First, we define a general class of instantiation-based procedures
for establishing the satisfiability of quantified formulas in Section~\ref{sec:qi}.
\end{shortv}
\begin{longv}
First, we define a general class of instantiation-based procedures
for establishing the satisfiability of quantified formulas in Section~\ref{sec:qi},
and show that these procedures can be used in part as an approach for solving
synthesis problems in Section~\ref{sec:synth}.
\end{longv}
We demonstrate instances of the procedure are sound and complete 
for formulas over linear real arithmetic ($\lra$) and linear integer arithmetic ($\lia$) with one quantifier alternation
in Sections~\ref{sec:lra} and~\ref{sec:lia}, 
two quantified fragments for which many current SMT solvers do not have efficient support for.
We show how our procedure can be integrated into the solving architecture used by SMT solvers in Section~\ref{sec:smt}. 
To our knowledge, our approach is the first complete algorithm for quantified linear arithmetic with one alternation
that is based purely on quantifier instantiation,
which has the advantage of being composable with existing techniques and whose soundness is straightforward to verify.
In Section~\ref{sec:results},
we demonstrate an implementation of the procedures for $\lia$ and $\lra$ in the SMT solver \cvc,
which in addition to having the aforementioned advantages,
outperforms state-of-the-art SMT solvers and theorem provers for quantified linear arithmetic benchmarks.

\subsection{Preliminaries}

We consider formulas in multi-sorted first-order logic.
A \emph{signature} $\Sigma$ consists of 
a countable set of sort symbols and
a set of function symbols.
Given a signature $\Sigma$,
well-sorted terms, atoms, literals, and formulas
are defined as usual, and referred to respectively as \emph{$\Sigma$-terms}.
We denote by $\fvars( t )$ 
the set of free variables occurring in the term $t$, and extend this notion to formulas.
A $\Sigma$-term or formula is \emph{ground} if it has no free variables.
A term written $t[ \vec k ]$ denotes a term whose free variables are in $\vec k$.

A \emph{$\Sigma$-interpretation $\I$} maps
\begin{itemize}
\item each set sort symbol $\sigma \in \Sigma$ to a non-empty set $\sigma^\I$,
the \emph{domain} of $\sigma$ in $\I$, and 
\item each
function $f \in \Sigma$ of sort $\sigma_1 \times \ldots \times \sigma_n \rightarrow \sigma$
to a total function $f^\I$ of sort $\sigma^\I_1 \times \ldots \times \sigma^\I_n \rightarrow \sigma^\I$
where $n > 0$, and 
to an element of $\sigma^\I$ when $n = 0$.
\end{itemize}
We write $t^\I$ to denote the interpretation of $t$ in $\I$, defined inductively as usual.
A satisfiability relation between $\Sigma$-interpretations and 
$\Sigma$-formulas, written $\I \models \varphi$, is also defined inductively as usual. In particular, we assume that $\I \models \lnot \varphi$ if and only if it is not the case
that $\I \models \varphi$.
We say that $\I$ is \emph{a model of $\varphi$} if $\I$ satisfies $\varphi$.
Formulas $\varphi_1$ and $\varphi_2$ are \emph{equivalent (up to $\vec k$)} 
if they are satisfied by the same set of models (when restricted to the interpretation of variables $\vec k$).

A \emph{theory} is a pair $T = (\Sigma, \mods)$ where 
$\Sigma$ is a signature and $\mods$ is a non-empty set of $\Sigma$-interpretations,
the \emph{models} of $T$.
We assume $\Sigma$ contains the equality predicate, which we denote by $\teq$.
Let $\modsof{\varphi}{T}$ denote the set of $T$-models of $\varphi$. 
Observe that $\modsof{\lnot \varphi}{T} = \mods \setminus \modsof{\varphi}{T}$.
A $\Sigma$-formula $\varphi[\vec x]$ is 
\emph{$T$-satisfiable}  
if it is satisfied by some interpretation in $\mods$ 
(i.e.\ $\modsof{\varphi}{T} \neq \emptyset$).
Dually,
a $\Sigma$-formula $\varphi[\vec x]$ is 
\emph{$T$-unsatisfiable}  
if it is satisfied by no interpretation in $\mods$ 
(i.e.\ $\modsof{\varphi}{T} = \emptyset$).
A formula $\varphi$ is \emph{$T$-valid} if every model of $T$ is a model of $\varphi$
(i.e., $\modsof{\varphi}{T} = \mods$).
Given a fragment $\lan$ of the language of $\Sigma$-formulas,
a $\Sigma$-theory $T$ is \emph{satisfaction complete with respect to $\lan$}
if every closed $T$-satisfiable formula of $\lan$ is $T$-valid. 
In terms of set of models,
satisfaction completeness means that $\modsof{\varphi}{T} \neq \emptyset$ implies 
\begin{shortv}$\modsof{\varphi}{T}=\mods$.\end{shortv}
\begin{longv}$\modsof{\varphi}{T}=\mods$,
or, in other words, for every $F \in \lan$ exactly one of the following two cases hold:
$\modsof{\varphi}{T} =\emptyset$, or $\modsof{\varphi}{T} = \mods$. 
If additionally $\lan$ is closed under negation, then, for every $\varphi \in \lan$,
either $\varphi$ or $\lnot F$ is unsatisfiable.
\end{longv}
%

A set $\Gamma$ of formulas \emph{$T$-entails} a $\Sigma$-formula $\varphi$,
written $\Gamma \models_T \varphi$,
if every model of $T$ that satisfies all formulas in $\Gamma$ satisfies $\varphi$ as well.
A set of literals $M$ \emph{propositionally entails} a formula $\varphi$, written $M \models_p \varphi$, 
if $M$ entails $\varphi$
when considering all atomic formulas in $M \cup \varphi$ as propositional variables;
such entailment is one of from propositional logic and is independent of the theory.

We write $\tra$ (resp. $\tia$) to denote the theory of real (resp. integer) arithmetic.
Its signature consists of the sort $\Real$ (resp. $\Int$), the binary predicate symbols $>$ and $<$,
functions $+$ and $\cdot$ denoting addition and multiplication,
and the constants of its sort interpreted as usual.
We write $t \leq s$ as shorthand for $\lnot ( t > s )$,
and $t \geq s$ as shorthand for $\lnot ( t < s )$.
We write $\lra$ (resp. $\lia$) to denote the language of linear real (resp. integer) arithmetic formulas,
that is, whose literals are of the form $(\lnot)( c_1 \cdot x_1 + \ldots + c_n \cdot x_n \larel c )$
where $c_1, \ldots, c_n, c$ and $x_1, \ldots, x_n$ are non-zero constants and distinct variables of sort $\Real$ (resp. $\Int$) respectively,
and $\larel$ is one of $>$, $<$, or $\teq$.
For each literal of this form, there exists an equivalent literal that is in \emph{solved form with respect to $x_i$} for each $i= 1, \ldots, n$.
That is, an $\lra$-literal is in solved form with respect to $x$ if it is of the form $(\lnot)( x \larel t )$, where $x \not\in \fvars( t )$.
Similarly,
an $\lia$-literal is in solved form with respect to $x$ if it is of the form $(\lnot)( c \cdot x \larel t )$, where $x \not\in \fvars( t )$ and $c$ is an integer constant greater than zero.
For integer constants $c_1$ and $c_2$ and non-zero constant $c$, we write $c_1 \equiv_c c_2$ to denote that $c_1$ and $c_2$ are congruent modulo $c$,
that is $(c_1 \ \opmod \ c) = (c_2 \ \opmod \ c)$, and
we write $c \mid c_1$ if $c$ divides $c_1$.

\section{Quantifier Instantiation for Theories}
\label{sec:qi}

In this section, we assume a fixed theory $T$ and a language $\lan$ that is closed under negation
and such that the satisfiability of finite sets of $\lan$ formulas modulo $T$ is decidable. We
present a procedure for checking satisfiability of formulas in the language $\qlan{\lan} =
\{ \forall \vec x\, \varphi[\vec a, \vec x] \mid \varphi[\vec a, \vec x] \in \lan \}$.

\begin{figure}[t]
\begin{framed}
$\mathcal{P}_{\mathcal{S}}( \exists \vec a\, \forall \vec x\, \varphi[\vec a, \vec x] )$:
\begin{enumerate}
\item[\ ] Let $\Gamma := \emptyset$ and $\vec{\con k}$, $\vec{\con e}$ be tuples of fresh constants of the same type as $\vec a$, $\vec x$.
\item[\ ] Repeat
 \begin{itemize}
  \item[\ ] If $\Gamma$ is $T$-unsatisfiable, then return ``unsat".
  \item[\ ] If $\Gamma' = \Gamma \cup \{ \lnot \varphi[\vec{\con k}, \vec{\con e}] \}$ is $T$-unsatisfiable, then return ``sat".
  \item[\ ] Otherwise,
\begin{itemize}
  \item[\ ] Let $\I$ be a model of $T$ and $\Gamma'$ and let $\vec{t}[\vec{\con k}] = \mathcal{S}( \I, \Gamma, \lnot \varphi[\vec{\con k}, \vec{\con e}] )$.
  \item[\ ] $\Gamma := \Gamma \cup \{ \normalize{\varphi[\vec{\con k}, \vec{t}[\vec{\con k}]]} \}$.
\end{itemize}  
\end{itemize}
\end{enumerate}
\vspace*{-2ex}
\end{framed}
\vspace{-2ex}
\caption{An instantiation-based approach $\mathcal{P}_{\mathcal{S}}$
for determining the $T$-satisfiability of $\exists \vec a\, \forall x\, \varphi[\vec a, x]$
parameterized by selection function $\mathcal{S}$.
\label{fig:proc-qi}}
\end{figure}

\subsection{An Instantiation Procedure and Its Soundness}

Figure~\ref{fig:proc-qi} presents an instantiation-based approach for determining the satisfiability 
of a $T$-formulas $\exists \vec a\, \forall \vec x\, \varphi[\vec a, \vec x]$,
where $\varphi[\vec a, \vec x]$ belongs to $\lan$.
The procedure introduces a tuple of distinct fresh constants
$\vec{\con k}$ of the same sort as $\vec{a}$,
and $\vec{ \con e }$ of the same sort as $\vec x$.
It maintains a set of formulas $\Gamma$, initially empty, 
and terminates when either $\Gamma$ or $\Gamma \cup \{ \lnot \varphi[\vec{\con k}, \vec{\con e}] \}$ is $T$-unsatisfiable.
On each iteration, 
the procedure invokes the subprocedure $\mathcal{S}$ (over which the procedure is parameterized),
which returns a tuple of terms $\vec{t}[\vec{\con k}]$ whose free variables are a subset of $\vec{ \con k }$.
We then add to $\Gamma$ the formula $\normalize{\varphi[\vec{\con k}, \vec{t}[\vec{\con k}]]}$,
a formula equivalent to $\varphi[\vec{\con k}, \vec{t}[\vec{\con k}]]$ up to $\vec{\con k}$\footnote{We further comment on examples of operators $\normalize{}$ in Sections~\ref{sec:lra} and~\ref{sec:lia}.}.
We call $\mathcal{S}$ the \emph{selection function} of $\mathcal{P}_{\mathcal{S}}$.

The intution of the algorithm is to
find a subset of the instances of $\forall \vec x\, \varphi[\vec a, \vec x]$ that are either
(a) unsatisfiable, and are thus sufficient for showing that $\forall \vec x\, \varphi[\vec a, \vec x]$ is unsatisfiable, or
(b) satifiable and entail $\forall \vec x\, \varphi[\vec a, \vec x]$.
The algorithm recognizes the latter case by checking the satisfiability of $\Gamma \cup \neg \varphi[ \vec k, \vec e]$ on each iteration of its main loop.
In either case, the algorithm may terminate before enumerating all instances of $\forall \vec x\, \varphi[\vec a, \vec x]$.
In practice, we have found the algorithm often terminates after enumerating only a small number of instances for benchmarks that occur in practice.

\begin{defn}
A selection function (for $\lan$) takes as arguments 
an interpretation $\I$,
a set of formulas $\Gamma$, and
a formula $\lnot \varphi[\vec{ \con k }, \vec{\con e}]$ in $\lan$, where
$\I \models \Gamma \cup \lnot \varphi[\vec{ \con k }, \vec{\con e}]$,
and returns a tuple of terms $\vec{t}[\vec{\con k}]$ such that 
$\normalize{\varphi[\vec{ \con k }, \vec{t}[\vec{\con k}]]}$ is also in $\lan$.
\end{defn}
Note that a selection function is only defined if $\I$ is a model for $T$, $\Gamma$ and $\lnot \varphi[\vec{ \con k }, \vec{\con e}]$.
We first show that the procedure always returns correct results, regardless of the behavior of the selection function, leaving the termination question for the next 
subsection.
Detailed proofs of the all claims in this paper can be found in the extended version of this report~\cite{RKK2015}.

\begin{lem}\label{lem:proc-qi-sound-unsat}
If $\mathcal{P}_{\mathcal{S}}$ terminates with ``unsat", 
then $\exists \vec a\, \forall \vec x\, \varphi[\vec a, \vec x]$ is $T$-unsatisfiable.
\end{lem}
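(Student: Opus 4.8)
The plan is to track the invariant that the set $\Gamma$ built by $\mathcal{P}_{\mathcal{S}}$ consists entirely of formulas that are $T$-entailed by $\forall \vec x\, \varphi[\vec{\con k}, \vec x]$ (where $\vec{\con k}$ are the fresh constants standing for $\vec a$). Concretely, I would show by induction on the number of loop iterations that after each iteration, $\forall \vec x\, \varphi[\vec{\con k}, \vec x] \models_T \bigwedge \Gamma$. The base case is trivial since $\Gamma = \emptyset$. For the inductive step, the only formula added is $\normalize{\varphi[\vec{\con k}, \vec t[\vec{\con k}]]}$; since this is equivalent (up to $\vec{\con k}$) to the instance $\varphi[\vec{\con k}, \vec t[\vec{\con k}]]$, and any model of $\forall \vec x\, \varphi[\vec{\con k}, \vec x]$ satisfies every instance $\varphi[\vec{\con k}, \vec s]$ in particular for $\vec s = \vec t[\vec{\con k}]^{\I}$, the new conjunct is entailed as well. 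Here I use that $\vec t[\vec{\con k}]$ has free variables among $\vec{\con k}$ only, so the instantiation is a legitimate ground (relative to $\vec{\con k}$) term tuple.

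Next, suppose $\mathcal{P}_{\mathcal{S}}$ terminates with ``unsat''. By inspection of the procedure, this happens only when $\Gamma$ is found to be $T$-unsatisfiable, i.e. $\modsof{\bigwedge \Gamma}{T} = \emptyset$. Combined with the invariant $\forall \vec x\, \varphi[\vec{\con k}, \vec x] \models_T \bigwedge \Gamma$, it follows that $\forall \vec x\, \varphi[\vec{\con k}, \vec x]$ is also $T$-unsatisfiable: any $T$-model of it would be a $T$-model of $\bigwedge \Gamma$, contradicting $T$-unsatisfiability of $\Gamma$. Finally, since $\vec{\con k}$ is a tuple of fresh constants not occurring in $\varphi$, the formula $\forall \vec x\, \varphi[\vec{\con k}, \vec x]$ is $T$-satisfiable if and only if $\exists \vec a\, \forall \vec x\, \varphi[\vec a, \vec x]$ is — the standard fact that existentially quantifying over a variable is equisatisfiable with replacing it by a fresh (Skolem) constant. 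Hence $\exists \vec a\, \forall \vec x\, \varphi[\vec a, \vec x]$ is $T$-unsatisfiable, as required.

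I don't anticipate a serious obstacle here: the argument is essentially bookkeeping. The one point that needs a little care is making the role of $\normalize{}$ precise — I rely only on the hypothesis that $\normalize{\varphi[\vec{\con k}, \vec t[\vec{\con k}]]}$ is equivalent up to $\vec{\con k}$ to $\varphi[\vec{\con k}, \vec t[\vec{\con k}]]$, which is exactly what the procedure guarantees, so no property of the specific normalizers in Sections~\ref{sec:lra} and~\ref{sec:lia} is needed. The second mildly delicate point is the equisatisfiability between the $\exists \vec a$ form and the fresh-constant form; this is routine but should be stated, since the lemma is phrased in terms of the former while the invariant is naturally phrased in terms of the latter. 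Note that soundness here is completely independent of the selection function $\mathcal{S}$ and of termination, which is precisely why the proof goes through without any assumption on $\mathcal{S}$.
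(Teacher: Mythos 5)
Your proposal is correct and follows essentially the same route as the paper's proof: the paper likewise observes that at termination $\Gamma$ is equivalent to a set of instances $\{\varphi[\vec{\con k}, \vec{t}_1], \ldots, \varphi[\vec{\con k}, \vec{t}_{p}]\}$, so its $T$-unsatisfiability forces that of $\forall \vec x\, \varphi[\vec{\con k}, \vec x]$ and hence, by freshness of $\vec{\con k}$, of $\exists \vec a\, \forall \vec x\, \varphi[\vec a, \vec x]$. Your explicit induction on loop iterations and the careful remark about $\normalize{}$ only make precise what the paper states directly.
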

\begin{longv}
\begin{proof}
In this case, 
there exists a set $\Gamma$ that is equivalent to
$\{\varphi[\vec{\con k}, \vec{t}_1], \ldots, \varphi[\vec{\con k}, \vec{t}_{p}] \}$ and is $T$-unsatisfiable
where $\vec{\con k}$ are distinct fresh constants.
Thus, $\forall \vec x\, \varphi[\vec{ \con k }, \vec x]$ is $T$-unsatisfiable.
Since $\vec{\con k}$ are distinct and fresh, we conclude that $\exists \vec a\, \forall \vec x\, \varphi[\vec a, \vec x]$ is $T$-unsatisfiable.
\qed
\end{proof}
\end{longv}

\begin{lem}\label{lem:proc-qi-sound-sat}
If $\mathcal{P}_{\mathcal{S}}$ terminates with ``sat", 
then $\exists \vec a\, \forall \vec x\, \varphi[\vec a, \vec x]$ is $T$-satisfiable.
\end{lem}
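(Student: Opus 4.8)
The plan is to argue semantically, using only how the procedure reaches the ``sat'' outcome. Suppose $\mathcal{P}_{\mathcal{S}}$ returns ``sat'' on some iteration. By the structure of the loop body, this occurs only after the first test of that iteration has \emph{failed}, so $\Gamma$ is $T$-satisfiable, and the second test has \emph{succeeded}, so $\Gamma \cup \{\lnot \varphi[\vec{\con k}, \vec{\con e}]\}$ is $T$-unsatisfiable. I would first restate the second fact as the entailment $\Gamma \models_T \varphi[\vec{\con k}, \vec{\con e}]$: any $T$-model satisfying every formula of $\Gamma$ cannot satisfy $\lnot \varphi[\vec{\con k}, \vec{\con e}]$, hence satisfies $\varphi[\vec{\con k}, \vec{\con e}]$.

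The main step is to eliminate $\vec{\con e}$ by generalization. The constants $\vec{\con e}$ are fresh and, in particular, do not occur in $\Gamma$: every formula added to $\Gamma$ is of the form $\normalize{\varphi[\vec{\con k}, \vec{t}_i[\vec{\con k}]]}$, produced from the selection function, and the only constants introduced by the procedure that such a formula may mention are $\vec{\con k}$ (the terms $\vec{t}_i$ have their free variables in $\vec{\con k}$, and $\vec{\con e}$ is never passed to $\normalize{}$ or to $\mathcal{S}$). By the standard generalization rule for first-order entailment, $\Gamma \models_T \varphi[\vec{\con k}, \vec{\con e}]$ therefore upgrades to $\Gamma \models_T \forall \vec x\, \varphi[\vec{\con k}, \vec x]$.

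Finally I would conclude: since $\Gamma$ is $T$-satisfiable, fix a model $\I$ of $T$ with $\I \models \Gamma$; the entailment just obtained gives $\I \models \forall \vec x\, \varphi[\vec{\con k}, \vec x]$, so this sentence is $T$-satisfiable. As $\vec{\con k}$ are fresh constants not in $\Sigma$, the reduct of $\I$ to $\Sigma$ is still a model of $T$, and reading off the values $\vec{\con k}^{\I}$ as witnesses for $\vec a$ yields a $T$-model of $\exists \vec a\, \forall \vec x\, \varphi[\vec a, \vec x]$, as required. I do not expect a genuine obstacle; the only point that needs care is the bookkeeping that $\vec{\con e}$ is absent from $\Gamma$ (so that generalization is sound), and it is worth noting that the argument never inspects the operator $\normalize{}$ or the previously chosen terms $\vec{t}_i$, since $\Gamma$ enters the proof only through the derived entailment $\Gamma \models_T \forall \vec x\, \varphi[\vec{\con k}, \vec x]$.
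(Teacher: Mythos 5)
Your proposal is correct and follows essentially the same route as the paper's proof: both hinge on the observation that the fresh constants $\vec{\con e}$ do not occur in $\Gamma$, so the $T$-unsatisfiability of $\Gamma \cup \{\lnot\varphi[\vec{\con k},\vec{\con e}]\}$ lifts (by generalization over $\vec{\con e}$) to $\Gamma \models_T \forall \vec x\,\varphi[\vec{\con k},\vec x]$, after which any $T$-model of the satisfiable set $\Gamma$ witnesses $\exists \vec a\,\forall \vec x\,\varphi[\vec a,\vec x]$. The only cosmetic difference is that the paper phrases the generalization step as the unsatisfiability of $\Gamma \cup \{\exists \vec x\,\lnot\varphi[\vec{\con k},\vec x]\}$ rather than as an entailment.
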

\begin{longv}
\begin{proof}
In this case,
there exists a set $\Gamma$ equivalent to 
$\{ \varphi[\vec{\con k}, \vec{t}_1], \ldots, \varphi[\vec{\con k}, \vec{t}_{p}] \}$
that is $T$-satisfiable, where $\vec{\con k}$ are distinct fresh constants,
and $\Gamma'$ equivalent to $\Gamma \cup \{ \lnot \varphi[\vec{\con k}, \vec {\con e}] \}$ that is $T$-unsatisfiable.
The variables $\vec{ \con e}$ do not occur in $\Gamma$,
since the only formulas added to it are of the form $\normalize{\varphi[\vec{\con k}, \vec{t}[\vec{\con k}]]}$.
Thus, we have that 
$\Gamma \cup \{ \exists \vec x\, \lnot \varphi[\vec{\con k}, \vec x] \}$ is $T$-unsatisfiable.
Let $\I$ be a model of $\Gamma$. Since $\I$ is not a model of $\Gamma'$,
it must be the case that $\I \not\models \exists \vec x\, \lnot \varphi[\vec{\con k}, \vec x]$,
and hence $\I$ is a model for $\exists \vec a\, \forall \vec x\, \varphi[\vec a, \vec x]$.
\qed
\end{proof}
\end{longv}

\subsection{Termination of the Instantiation Procedure}

The following properties of selection functions will be of interest.

\begin{defn}[Finite]
A selection function $\mathcal{S}$ is \emph{finite} for $\varphi[\vec{ \con k }, \vec{\con e}]$ if 
there exists a finite set $\mathcal{S}^\ast( \varphi[\vec{ \con k }, \vec{\con e}] )$ such that 
$\mathcal{S}( \I, \Gamma, \lnot \varphi[\vec{ \con k }, \vec{\con e}] ) \in \mathcal{S}^\ast( \varphi[\vec{ \con k }, \vec{\con e}] )$
for all $\I$, $\Gamma$.
\end{defn}

\begin{defn}[Monotonic]
A selection function $\mathcal{S}$ is \emph{monotonic} for $\varphi[\vec{ \con k }, \vec{\con e}]$ if
whenever $\Gamma \models \varphi[\vec{\con k}, \vec{t}]$, 
we have that $\mathcal{S}( \I, \Gamma, \lnot \varphi[\vec{ \con k }, \vec{\con e}] ) \neq \vec{t}$.
\end{defn}
Observe that, if $\mathcal{S}$ is a monotonic selection function, then for any finite list of terms $\vec{t}_1, \ldots \vec{t}_n$ we have
$
\mathcal{S}( \I, \{ 
  \normalize{\varphi[\vec{\con k}, \vec{t_1}]}, 
  \ldots,
  \normalize{\varphi[\vec{\con k}, \vec{t_n}]} \}, \lnot \varphi[\vec{ \con k }, \vec{\con e}] ) \notin \{ \vec{t}_1, \ldots, \vec{t}_n\}
$.
\begin{defn}[Model-Preserving]
A selection function $\mathcal{S}$ is \emph{model-preserving} for $\varphi[\vec{ \con k }, \vec{\con e}]$ if whenever
$\mathcal{S}( \I, \Gamma, \lnot \varphi[\vec{ \con k }, \vec{\con e}] ) = \vec{t}$,
we have that $\I \models \lnot \varphi[\vec{ \con k }, \vec{t}]$.
\end{defn}

\begin{lem}\label{lem:sel-mm-m}
A selection function that is model-preserving for $\varphi[\vec{ \con k }, \vec{\con e}]$ is also monotonic for $\varphi[\vec{ \con k }, \vec{\con e}]$.
\end{lem}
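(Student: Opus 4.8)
The plan is to prove the contrapositive at the level of a single invocation: I will assume that $\mathcal{S}$ is \emph{not} monotonic for $\varphi[\vec{\con k}, \vec{\con e}]$ and derive that it cannot be model-preserving for $\varphi[\vec{\con k}, \vec{\con e}]$ either. Concretely, if $\mathcal{S}$ fails to be monotonic then there exist an interpretation $\I$, a set of formulas $\Gamma$, and a tuple of terms $\vec t$ such that $\Gamma \models \varphi[\vec{\con k}, \vec t]$ and yet $\mathcal{S}(\I, \Gamma, \lnot \varphi[\vec{\con k}, \vec{\con e}]) = \vec t$. Recall that, by the definition of selection function, $\mathcal{S}$ is only defined on arguments for which $\I \models \Gamma \cup \lnot\varphi[\vec{\con k}, \vec{\con e}]$, so in particular $\I \models \Gamma$.

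First I would combine these two facts: since $\I \models \Gamma$ and $\Gamma \models_T \varphi[\vec{\con k}, \vec t]$, and since $\I$ is (by the standing assumption on selection functions) a model of $T$, we get $\I \models \varphi[\vec{\con k}, \vec t]$. Next I would observe that this directly contradicts the model-preserving property: if $\mathcal{S}$ were model-preserving for $\varphi[\vec{\con k}, \vec{\con e}]$, then from $\mathcal{S}(\I, \Gamma, \lnot \varphi[\vec{\con k}, \vec{\con e}]) = \vec t$ we would have to conclude $\I \models \lnot \varphi[\vec{\con k}, \vec t]$, i.e.\ $\I \not\models \varphi[\vec{\con k}, \vec t]$. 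Since an interpretation cannot both satisfy and falsify the same formula, $\mathcal{S}$ cannot be model-preserving. This establishes the contrapositive, hence the lemma.

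The only subtlety — and the one point I would make sure to state carefully — is the implicit hypothesis that the interpretation $\I$ passed to a selection function is a model of $T$; this is exactly what licenses the step from $\Gamma \models_T \varphi[\vec{\con k},\vec t]$ and $\I \models \Gamma$ to $\I \models \varphi[\vec{\con k},\vec t]$, using the definition of $T$-entailment. There is no real obstacle here: the argument is a two-line chase once the definitions of \emph{monotonic}, \emph{model-preserving}, and \emph{selection function} are unfolded, and the work is entirely in lining up which interpretation witnesses what. I would present it directly (not by contrapositive) if preferred: fix $\I, \Gamma, \vec t$ with $\mathcal{S}(\I,\Gamma,\lnot\varphi[\vec{\con k},\vec{\con e}]) = \vec t$; by model-preservation $\I \models \lnot\varphi[\vec{\con k},\vec t]$, so $\I \not\models \varphi[\vec{\con k},\vec t]$; since $\I$ is a $T$-model with $\I \models \Gamma$, this forces $\Gamma \not\models_T \varphi[\vec{\con k},\vec t]$, which is precisely the monotonicity condition for that $\vec t$.
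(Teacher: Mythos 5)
Your proof is correct and follows essentially the same argument as the paper's: both hinge on the observation that the selection function is only invoked when $\I \models \Gamma \cup \lnot\varphi[\vec{\con k},\vec{\con e}]$, so $\Gamma \models \varphi[\vec{\con k},\vec t]$ forces $\I \models \varphi[\vec{\con k},\vec t]$, which clashes with the $\I \models \lnot\varphi[\vec{\con k},\vec t]$ guaranteed by model-preservation. If anything, your version is slightly more careful, since the paper's write-up only treats the case where $\varphi[\vec{\con k},\vec t]$ is literally a member of $\Gamma$ rather than merely entailed by it.
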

\begin{longv}
\begin{proof}
Assume that $\mathcal{S}$ is model-preserving for $\varphi[\vec{ \con k }, \vec{\con e}]$ and that 
$\mathcal{S}( \I, \Gamma \cup \{ \varphi[\vec{\con k}, \vec{ t }] \}, \lnot \varphi[\vec{ \con k }, \vec{\con e}] ) = \vec{s}$.
By definition of selection function, we have that $\I \models \varphi[\vec{ \con k }, \vec{t}]$.
By definition of model-preserving, we have that $\I \models \lnot \varphi[\vec{ \con k }, \vec{s}]$.
Thus, $\vec{s} \neq \vec{t}$ and $\mathcal{S}$ is monotonic for $\varphi[\vec{ \con k }, \vec{\con e}]$.
\qed
\end{proof}
\end{longv}

\begin{thm}\label{lem:sel-fin-mono}
If $\mathcal{S}$ is finite and monotonic for $\varphi[\vec{ \con k }, \vec{\con e}]$ in $\lan$, 
then $\mathcal{P}_{\mathcal{S}}$ is a (terminating) decision procedure for the $T$-satisfiability of 
$\exists \vec a\, \forall \vec x\, \varphi[\vec a, \vec x]$.
\end{thm}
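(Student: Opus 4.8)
The two soundness lemmas, Lemma~\ref{lem:proc-qi-sound-unsat} and Lemma~\ref{lem:proc-qi-sound-sat}, already guarantee that whenever $\mathcal{P}_{\mathcal{S}}$ answers ``unsat'' or ``sat'' the answer is correct, and these are its only two exit points. So the plan is to prove that under the stated hypotheses $\mathcal{P}_{\mathcal{S}}$ always halts; combined with soundness this makes it a decision procedure, since $\exists\vec a\,\forall\vec x\,\varphi[\vec a,\vec x]$ is either $T$-satisfiable or $T$-unsatisfiable and the procedure must return one of the two answers.

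I would first note that a single pass through the loop is effective. The body performs two $T$-satisfiability tests, on $\Gamma$ and on $\Gamma'=\Gamma\cup\{\lnot\varphi[\vec{\con k},\vec{\con e}]\}$. By the definition of a selection function every formula ever added to $\Gamma$ has the form $\normalize{\varphi[\vec{\con k},\vec t[\vec{\con k}]]}\in\lan$, so $\Gamma$ is a finite set of $\lan$-formulas; as $\lan$ is closed under negation, so is $\Gamma'$, and its $T$-satisfiability is decidable by the standing assumption of Section~\ref{sec:qi}. Evaluating $\mathcal{S}$ and applying $\normalize{}$ are effective too, so each iteration terminates.

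The heart of the argument is termination of the loop, which is where finiteness and monotonicity are both used: finiteness keeps $\mathcal{S}$ drawing from a fixed finite pool, while monotonicity prevents it from ever drawing the same term twice, and neither alone would suffice. Concretely, let $n=|\mathcal{S}^\ast(\varphi[\vec{\con k},\vec{\con e}])|$, which is finite since $\mathcal{S}$ is finite for $\varphi[\vec{\con k},\vec{\con e}]$. Consider any iteration that does not return: it reaches the ``Otherwise'' branch, where $\mathcal{S}$ yields some $\vec t$ and $\normalize{\varphi[\vec{\con k},\vec t]}$ is added to $\Gamma$. At the start of that iteration $\Gamma$ is precisely the set $\{\normalize{\varphi[\vec{\con k},\vec t_1]},\ldots,\normalize{\varphi[\vec{\con k},\vec t_j]}\}$ of the instances added so far, so by the observation following the definition of a monotonic selection function $\vec t\notin\{\vec t_1,\ldots,\vec t_j\}$; and $\vec t\in\mathcal{S}^\ast(\varphi[\vec{\con k},\vec{\con e}])$ by finiteness. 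Hence each non-returning iteration enlarges, by exactly one element, the set of distinct terms selected so far, all of which live in the $n$-element pool $\mathcal{S}^\ast(\varphi[\vec{\con k},\vec{\con e}])$; there can be at most $n$ such iterations. Since every other iteration returns, the loop runs at most $n+1$ times and $\mathcal{P}_{\mathcal{S}}$ halts.

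The only delicate point I anticipate is the bookkeeping in the termination step — matching the current value of $\Gamma$ to the hypothesis of the cited observation about monotonic selection functions, and thereby correctly concluding that no term is ever reselected. Everything else (effectiveness of each iteration, and deducing the theorem from the soundness lemmas) is routine.
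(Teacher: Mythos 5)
Your proof is correct and follows essentially the same route as the paper's, which simply asserts that finiteness plus monotonicity bound the number of iterations, invokes the decidability assumption for effectiveness, and cites Lemmas~\ref{lem:proc-qi-sound-unsat} and~\ref{lem:proc-qi-sound-sat} for soundness. You have merely spelled out the counting argument (via the observation following the monotonicity definition) that the paper leaves implicit.
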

\begin{proof}
Given a monotonic and finite $\mathcal{S}$, the procedure $\mathcal{P}_{\mathcal{S}}$
can only execute a finite number of iterations.
Assuming a decision procedure for determining the $T$-satisfiability of $T$-formulas in $\lan$,
$\mathcal{P}_{\mathcal{S}}( \exists \vec a\, \forall \vec x\, \varphi[\vec a, \vec x] )$ must terminate.
By Lemmas~\ref{lem:proc-qi-sound-unsat} and~\ref{lem:proc-qi-sound-sat}, 
$\mathcal{P}_{\mathcal{S}}$ is a decision procedure for the $T$-satisfiability of $\exists \vec a\, \forall \vec x\, \varphi[\vec a, \vec x]$.
\qed
\end{proof}

\ 

In this paper we will identify selection functions $\mathcal{S}$ that are finite and monotonic
for all $\exists \vec a\, \forall \vec x\, \varphi[\vec a, \vec x]$ residing in fragments $\lan$.
The fragments we consider are satisfaction complete. We consider satisfaction completeness
to be a good guiding principle when choosing candidate logical fragments to which our method
can be applied successfully. 

\begin{longv}
\subsection{Connection to Synthesis}
\label{sec:synth}


The connection between quantifier elimination and synthesis has been shown fruitful in previous work~\cite{KuncakETAL10CompleteFunctionalSynthesis};
it is one of our motivations for further improving quantified reasoning modulo theories. The procedure mentioned in this section can be used to synthesize functions from certain classes of specifications.
Consider (second-order) $T$-formulas 
of the form:
\begin{eqnarray} \label{eqn:syn_conj}
\exists \vec f \:
\forall \vec x \: \varphi[\vec f, \vec x]
\end{eqnarray}
of the form $\exists \vec f \: \forall \vec x \: \varphi[\vec f, \vec x]$,
where $\varphi$ is a quantifier-free formula,
$\vec x = ( x_1, \ldots, x_n )$ is a tuple of variables of sort $\sigma_i$ for $i = 1, \ldots, n$, and
$\vec f = ( f_1, \ldots, f_m )$ is a tuple of functions of sort $\sigma_1 \times \ldots \times \sigma_n \rightarrow \tau_j$ for $j =1, \ldots, m$.
We call such formulas \emph{synthesis conjectures}.
A synthesis conjecture is \emph{single invocation} (over $\lan$) if it is equivalent to:
\begin{eqnarray} \label{eqn:syn_conj_si}
\exists \vec f \:
\forall \vec x \: \psi[\vec x, \vec f( \vec x )]
\end{eqnarray}
where $\psi[\vec x, \vec y] \in \lan$.
That is, functions from $\vec f$ are applied to the tuple $\vec x$ only.
The formula (\ref{eqn:syn_conj_si}) is equivalent to the (first-order) formula
$\forall \vec x \: \exists \vec y \: \psi[ \vec x,\vec y]$, whose negation
\begin{eqnarray} \label{eqn:syn_conj_si_fo}
\exists \vec x \: \forall \vec y \: \neg \psi[\vec x, \vec y]
\end{eqnarray}
is suitable as an input to Figure~\ref{fig:proc-qi}.
As observed in~\cite{ReynoldsDKBT15Cav}, solutions for single invocation synthesis conjectures  
can be extracted from an unsatisfiable core of instantiations when proving the unsatisfiability of (\ref{eqn:syn_conj_si_fo}).
In particular, let $\vec{ \con k }$ be a set of distinct fresh variables of the same sort as $\vec x$, and say the set
$\{ \normalize{\neg \psi[\vec{ \con k }, \vec{t}_1[ \vec{ \con k }] ]}, \ldots, \normalize{\neg \psi[\vec{ \con k },  \vec{t}_p[ \vec{ \con k }] ]} \}$
is $T$-unsatisfiable
where $\vec{t}_i = (t^1_i[ \vec{ \con k }], \ldots, t^m_i[ \vec{ \con k }] )$ for $i = 1, \ldots, p$.
Then:
\begin{equation} \label{eqn:syn_conj_si_sol}
1 \leq j \leq m : f_j = \lambda \vec x.\, \ite( \psi[\vec x, t^j_p[\vec x]], t^j_p[\vec x], (\,\cdots\, 
                                          \ite( \psi[\vec x, t^j_2[\vec x]], t^j_2[\vec x], t^j_1[\vec x] )  ))
\end{equation}
is a solution for $\vec f$ in (\ref{eqn:syn_conj_si}).
We use the instantiation-based procedure in Figure~\ref{fig:proc-qi} for discharging (\ref{eqn:syn_conj_si_fo}).
In contrast to prior work~\cite{ReynoldsDKBT15Cav}, 
we here devise
selection functions $\mathcal{S}$ for $\lan$ that are finite and monotonic, 
obtaining a sound and complete method for synthesizing tuples of functions whose specification is a
single invocation synthesis conjecture over $\lan$.
In the following sections, we show such selection functions both for linear real arithmetic ($\lra$) and linear real arithmetic ($\lia$).

It is important to note that the solution (\ref{eqn:syn_conj_si_sol}) does not necessarily belong to the language $\lan$,
since there is no restriction on the selection functions for $\lan$ that restricts its return value $\vec{t}$ to terms in $\lan$.
For example, 
in our approach for linear real arithmetic, $\vec{t}$ may contain a free distinguished constant $\delta$ representing an infinitesimal positive value,
and in our approach for linear integer arithmetic, $\vec{t}$ may contain integer division.
Additional steps may be necessary in practice for making (\ref{eqn:syn_conj_si_sol}) a computable function.
\end{longv}

\begin{longv}
\subsection{Illustration: Instantiation for a Simple Fragment of $\lra$}

We first present a selection function for a restricted class $\lan$ of $\lra$-formulas $\exists \vec a\, \forall x, \varphi[ \vec a, x ]$,
namely whose universal quantifier is over single variable $x$ of sort $\Real$, $\vec a$ are variables of sort $\Real$,
and whose (skolemized) body $\varphi[ \vec{ \con k }, \con e ]$ is of the form:
\begin{equation} \label{eqn:quant-simp}
( \con e < \ell_1 \vee \ldots \vee \con e < \ell_n \vee \con e > u_1 \vee \ldots \vee \con e > u_m  )
\end{equation}
where at least one of $\{ n, m \}$ is greater than zero, and $\con e \not\in \fvars( \ell_1, \ldots, \ell_n, u_1, \ldots, u_m )$.
Figure~\ref{fig:sel-real-simp} gives a selection function for $\procsslra$.
It considers the interpretation of terms $\ell_1, \ldots, \ell_n$ and $u_1, \ldots, u_m$ in a model $\I$ of $\Gamma$.
If $n>0$, then $\procsslra$ returns the $\ell_j$ whose value is maximal in $\I$.
If $n=0$, then $m>0$ and $\procsslra$ returns the $u_i$ whose value is minimal in $\I$.

\begin{figure}[t]
\begin{framed}
$\procsslra( \I, \Gamma, \neg \varphi[ \vec{\con k}, \con e ] )$
\ \ \ where $\varphi[ \vec{\con k}, \con e ]$ is \ \ \ 
  $\displaystyle\bigvee_{i=1}^n e < l_i \lor 
                \bigvee_{i=1}^m e > u_i$, \ for $n > 0$ or $m > 0$

\hspace{1cm}
Return
$
    \begin{cases}
    \ell_j, & \mbox{ if } n>0 \mbox{ and } \ \mathsf{max} \{ \ell_1^\I, \ldots, \ell_n^\I \} = \ell_j^\I \\
    u_i, &  \mbox{ if } n=0 \mbox{ and } \ \mathsf{min} \{ u_1^\I, \ldots, u_m^\I \} = u_i^\I 
    \end{cases}
$
\end{framed}
\vspace{-4ex}
\caption{A selection function $\procsslra$ for a simple fragment of $\lra$.
\label{fig:sel-real-simp}}
\end{figure}

\begin{lem}\label{lem:simp-finite}
$\procsslra$ is finite for $\varphi[ \vec{ \con k }, \con e ]$.
\end{lem}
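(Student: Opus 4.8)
The plan is to exhibit the finite witness set $\procsslra^\ast(\varphi[\vec{\con k}, \con e])$ required by the definition of a finite selection function, and then verify by inspection of Figure~\ref{fig:sel-real-simp} that every return value of $\procsslra$ lies in it, uniformly over all choices of $\I$ and $\Gamma$.

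Concretely, I would take $\procsslra^\ast(\varphi[\vec{\con k}, \con e]) \defas \{\ell_1, \ldots, \ell_n\} \cup \{u_1, \ldots, u_m\}$, the set of all lower- and upper-bound terms occurring in the disjuncts of the skolemized body $\varphi[\vec{\con k}, \con e]$ of the form (\ref{eqn:quant-simp}). Since $\varphi$ is a formula of that syntactic fragment, the counts $n$ and $m$ are finite, so this set is finite (it has at most $n + m$ elements); moreover it depends only on $\varphi[\vec{\con k}, \con e]$ and not on $\I$ or $\Gamma$, exactly as the definition of a finite selection function demands.

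Next I would argue membership of the return value. Fix any $\I$ and $\Gamma$ for which $\procsslra(\I, \Gamma, \neg\varphi[\vec{\con k}, \con e])$ is defined. The definition splits on whether $n > 0$. If $n > 0$, then $\{\ell_1^\I, \ldots, \ell_n^\I\}$ is a nonempty finite set of reals, so $\mathsf{max}$ of it is attained by $\ell_j^\I$ for some $j \in \{1, \ldots, n\}$, and $\procsslra$ returns that $\ell_j \in \procsslra^\ast(\varphi[\vec{\con k}, \con e])$. If $n = 0$, then since at least one of $n, m$ is positive we have $m > 0$, the set $\{u_1^\I, \ldots, u_m^\I\}$ is nonempty and finite, $\mathsf{min}$ is attained by some $u_i^\I$, and $\procsslra$ returns $u_i \in \procsslra^\ast(\varphi[\vec{\con k}, \con e])$. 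These two cases are exhaustive, so in every case the output lies in the finite set $\procsslra^\ast(\varphi[\vec{\con k}, \con e])$, which is what had to be shown.

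There is no genuinely hard step here: the content is entirely in unwinding the definitions. The only points worth stating carefully are (i) that $n$ and $m$ are finite, which is immediate from $\varphi$ belonging to the fragment (\ref{eqn:quant-simp}); (ii) that the $\mathsf{max}$ and $\mathsf{min}$ in the definition of $\procsslra$ are well defined, since they range over nonempty finite sets of reals and are therefore realized by at least one index (any maximizer/minimizer may be chosen); and (iii) that the case split $n > 0$ versus $n = 0$ is exhaustive given the standing hypothesis $n > 0$ or $m > 0$.
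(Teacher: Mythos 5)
Your proof is correct and follows exactly the paper's argument: the witness set is $\{\ell_1,\ldots,\ell_n\}\cup\{u_1,\ldots,u_m\}$, which is finite, depends only on $\varphi[\vec{\con k},\con e]$, and contains every possible return value of $\procsslra$ by the case split on $n>0$ versus $n=0$. The paper states this in one line; your version merely spells out the same observation in more detail.
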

\begin{proof}
The terms returned by $\procsslra$ are in $\{ \ell_1, \ldots, \ell_n \}$ when $n>0$, and in $\{ u_1, \ldots, u_m \}$ when $n=0$.
\qed
\end{proof}

\begin{lem}\label{lem:simp-monotonic}
$\procsslra$ is monotonic for $\varphi[ \vec{ \con k }, \con e ]$.
\end{lem}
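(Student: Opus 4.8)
The plan is to obtain monotonicity as a consequence of model-preservation, invoking Lemma~\ref{lem:sel-mm-m}; so the substantive task is to show that $\procsslra$ is model-preserving for $\varphi[\vec{\con k}, \con e]$. Concretely, I would fix $\I$ and $\Gamma$ with $\I \models \Gamma \cup \{ \lnot \varphi[\vec{\con k}, \con e] \}$ (this holds whenever the selection function is applied), set $t \colonequals \procsslra(\I, \Gamma, \lnot \varphi[\vec{\con k}, \con e])$, and prove $\I \models \lnot \varphi[\vec{\con k}, t]$.

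First I would unfold the hypothesis: since $\varphi[\vec{\con k}, \con e] = \bigvee_{i=1}^n \con e < \ell_i \lor \bigvee_{i=1}^m \con e > u_i$, the assumption $\I \models \lnot \varphi[\vec{\con k}, \con e]$ is equivalent to $\con e^\I \geq \ell_i^\I$ for all $i \in \{1,\dots,n\}$ and $\con e^\I \leq u_j^\I$ for all $j \in \{1,\dots,m\}$; in particular $\max\{\ell_1^\I,\dots,\ell_n^\I\} \leq \con e^\I \leq \min\{u_1^\I,\dots,u_m^\I\}$ whenever the relevant sides are nonempty. I would also note that each of $\ell_1,\dots,\ell_n,u_1,\dots,u_m$ is $\con e$-free by the side condition on the fragment, so substituting any one of them for $\con e$ is well defined and the truth of the resulting literals in $\I$ depends only on the interpretations of the $\ell$'s and $u$'s. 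Then I would split on the case distinction of Figure~\ref{fig:sel-real-simp}. If $n > 0$, then $t = \ell_j$ with $\ell_j^\I = \max\{\ell_1^\I,\dots,\ell_n^\I\}$, so $\ell_j^\I \not< \ell_i^\I$ for every $i$ and $\ell_j^\I \leq \con e^\I \leq u_i^\I$, hence $\ell_j^\I \not> u_i^\I$ for every $i$; thus every disjunct of $\varphi[\vec{\con k}, \ell_j]$ is false in $\I$, i.e.\ $\I \models \lnot \varphi[\vec{\con k}, t]$. If $n = 0$ (so $m > 0$), then $t = u_i$ with $u_i^\I = \min\{u_1^\I,\dots,u_m^\I\}$, there are no lower-bound disjuncts, and $u_i^\I \not> u_k^\I$ for every $k$, so again $\I \models \lnot \varphi[\vec{\con k}, t]$. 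This establishes model-preservation, and Lemma~\ref{lem:sel-mm-m} yields monotonicity.

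I do not anticipate a genuine obstacle: the argument is just reading $\lnot \varphi$ as a conjunction of lower and upper bounds on $\con e$ and observing that the largest lower bound (respectively the smallest upper bound) still lies within all of those bounds precisely because $\con e$ itself does. The only points demanding minor care are that the substituted term is $\con e$-free (guaranteed by the fragment's side condition, so that $\varphi[\vec{\con k}, t]$ is meaningful) and the routine bookkeeping of keeping the $n > 0$ and $n = 0$ branches separate.
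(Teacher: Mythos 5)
Your proof is correct, but it is organized differently from the paper's. The paper proves monotonicity directly, by contradiction: assuming $\Gamma \models \varphi[\vec{\con k}, t]$ with $t = \ell_i$, it uses $\I \models \Gamma$ to get $\I \models \varphi[\vec{\con k}, \ell_i]$, rules out every disjunct of the form $\ell_i > u_k$ via the chain $\con e^\I \geq \ell_i^\I > u_k^\I \geq \con e^\I$, and concludes that some disjunct $\ell_i < \ell_{k'}$ must hold, so $\ell_i$ is not the maximal lower bound and would not be returned (the $n=0$ case is symmetric). You instead prove the strictly stronger property that $\procsslra$ is \emph{model-preserving} and then obtain monotonicity from Lemma~\ref{lem:sel-mm-m}. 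The core inequality reasoning is identical in both ($\ell_j^\I \leq \con e^\I \leq u_k^\I$, plus maximality of $\ell_j$ among the lower bounds), and your handling of the side condition that $t$ is $\con e$-free is the right point to flag. What your decomposition buys is a cleaner argument --- no contradiction, no case analysis over which disjunct of $\varphi[\vec{\con k}, \ell_i]$ happens to be satisfied --- and it exactly anticipates the strategy the paper uses for the general selection functions (Lemmas~\ref{lem:procslrahh-model-preserve}, \ref{lem:lra-monotonic}, and~\ref{lem:lia-monotonic}), where model-preservation is the property actually established and monotonicity is always derived through Lemma~\ref{lem:sel-mm-m}. The paper's direct argument buys only that it proves no more than the lemma literally asserts.
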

\begin{proof}
Let $\I$ be a model of $\lra$ and $\Gamma$, where $\Gamma \models \varphi[\vec{\con k}, t]$,
and assume by contradiction $\procsslra$ returns $t$.
Consider the case where $n>0$ and $t = \ell_i$ for some $i \in \{ 1, \ldots, n \}$.
Since $\I \models \Gamma \cup \lnot \varphi[ \vec{ \con k }, \con e ]$, it satisfies:
\begin{equation}
\begin{aligned}[c]
( \ell_i < \ell_1 \vee \ldots \vee \ell_i < \ell_n \vee \ell_i > u_1 \vee \ldots \vee \ell_i > u_m ) \wedge \\
\con e \geq \ell_1 \wedge \ldots \wedge \con e \geq \ell_n \wedge \con e \leq u_1 \wedge \ldots \wedge \con e \leq u_m
\end{aligned}
\end{equation}
Assume $\I$ satisfies $\ell_i > u_k$ for some $k \in \{ 1, \ldots, m \}$.
We have $\I$ also satisfies $\con e \geq \ell_i$ and $\con e \leq u_k$, and thus $\con e^\I \geq \ell_i^\I > u_k^\I \geq \con e^\I$.
Thus, $\I$ must satisfy $\ell_i < \ell_{k'}$ for some $k' \in \{ 1, \ldots, n \}$, and thus $\mathsf{max} \{ \ell_1^\I, \ldots, \ell_n^\I \} \neq \ell_i^\I$.
Thus, $\procsslra( \Gamma ) \neq \ell_i$.
By symmetrical reasoning when $n=0, m>0$ and $t = u_j$ for some $j \in \{ 1, \ldots, m \}$,
we have that $\procsslra( \Gamma ) \neq u_j$.
Thus, $\procsslra$ does not return $t$, and thus is monotonic for $\varphi[ \vec{ \con k }, \con e ]$.
\qed
\end{proof}

\begin{example}
Consider the formula $\forall x\, ( x < b \vee x > a )$.
The negated skolemized form ($\lnot \varphi[\vec{\con k}, \con e]$ in Figure~\ref{fig:proc-qi}) of this formula is equivalent to the formula $b \leq e \wedge e \leq a$ where $e$ is a fresh constant.
A possible run of $\mathcal{P}_{\procsslra}$ on this input is as follows,
where $\Gamma$ is initially $\emptyset$
and on each iteration $\Gamma' = \Gamma \cup \{ b \leq e \wedge e \leq a \}$.
\[\begin{array}{c@{\hspace{1em}}c@{\hspace{1em}}c@{\hspace{1em}}c@{\hspace{1em}}c@{\hspace{1em}}l}
\hline
\text{\#} & \Gamma & \Gamma' &  & t[\vec{\con k}] & \text{Add to } \Gamma  \\
\hline
 1 & \text{sat} & \text{sat} & \mathsf{max} \{ b^\I \} = b^\I & b &  b < b \lor b > a
 \\
 2 & \text{sat} & \text{unsat} & & \\
\hline
\end{array}
\]
In step 2, note that $\Gamma' = \{ b < b \lor b > a, b \le e \land e \le a \}$, which is unsat.
The run establishes that $\exists ab\, \forall x\, ( x > a \vee x < b )$ is $\lra$-satisfiable.
\qed
\end{example}

\begin{example}
Consider the formula $\forall x\, ( x < a \vee x < b )$,
whose skolemized negation is equivalent to $e \geq a \wedge e \geq b$.
A possible run of $\mathcal{P}_{\procsslra}$ on this input is as follows.
\[\begin{array}{c@{\hspace{1em}}c@{\hspace{1em}}c@{\hspace{1em}}c@{\hspace{1em}}c@{\hspace{1em}}l}
\hline
\text{\#} & \Gamma & \Gamma' &  & t[\vec{\con k}] & \text{Add to } \Gamma  \\
\hline
 1 & \text{sat} & \text{sat} & \mathsf{max} \{ a^\I, b^\I \} = a^\I & a & a < a \lor a < b 
 \\
 2 & \text{sat} & \text{sat} & \mathsf{max} \{ a^\I, b^\I \} = b^\I & b & b < a \lor b < b
 \\
 3 & \text{unsat} & & & \\
\hline
\end{array}
\]
Note that
the formula added in step 1, equivalent to $a < b$, 
ensures that, in the second iteration, $\mathsf{max} \{ a^\I, b^\I \} \neq a^\I$.
The run establishes that $\exists ab\, \forall x\, ( x > a \vee x > b )$ is $\lra$-unsatisfiable,
as expected in a linear order without endpoints.
\qed
\end{example}
\end{longv}

\section{Instantiation for Quantifier-Free $\lra$-Formulas}
\label{sec:lra}

Consider the case where $\vec a$ and $\vec x$ are vectors of $\Real$ variables
and $\lan$ is the class of formulas $\exists \vec a\, \forall \vec x\, \varphi[ \vec a, \vec x ]$
where $\varphi[ \vec a, \vec x ]$ is an arbitrary quantifier-free $\lra$-formula.
We assume that equalities are eliminated from $\varphi$ 
\begin{longv}
by the transformation:
\[
t \teq 0  \transform  0 \leq t \wedge 0 \geq t
\]
\end{longv}
\begin{shortv}
by replacing $t \teq s$ with $s \leq t \wedge s \geq t$.

\end{shortv}
Figure~\ref{fig:sel-lra} gives a selection function $\procslra$ for $\lra$,
which takes an interpretation $\I$, a set of formulas $\Gamma$, and the formula $\lnot \varphi[\vec{ \con k }, \vec{\con e}]$.
It invokes the recursive procedure $\procslrah$ which constructs a term 
corresponding to each variable in $\vec{ \con e }$.
Analogous to existing approaches for linear quantifier elimination~\cite{Loos93applyinglinear,nipkow2008linear},
our approach makes use of non-standard terms for symbolically representing substitutions.
In particular, the terms we consider may involve a free distinguished constant
$\delta$, representing an infinitesimal positive value.
For each variable $\con e_i$ from $\vec{ \con e }$, the procedure $\procslrah$ 
invokes the (non-deterministic) subprocedure $\procslrahh$, which chooses a term corresponding to $\con e_i$
based on a set of literals $M$ over the atoms of $\psi$ which propositionally entail $\psi$ and are satisfied by $\I$, 
which we call a \emph{propositionally satisfying assignment} for $\psi$.
We partition $M$ into three sets $M_\ell$, $M_u$ and $M_c$, where 
$M_\ell$ contains literals that correspond to lower bounds for $\con e$,
$M_u$ contains literals that correspond to upper bounds for $\con e$,
and $M_c$ contains the remaining literals.
The sets $M_\ell$ and $M_u$ are equivalent to sets of literals that are in solved form with respect to $\con e$.
When $M_\ell$ contains at least one literal, we may return the lower bound whose value is maximal according to $\I$,
and similarly for $M_u$.
If both $M_\ell$ and $M_u$ are empty, we return the term $0$.
When $\procslrahh$ returns the term $t_i$, 
we apply the substitution $\{ \con e_i \mapsto t_i \}$ to $\psi$ and $\vec t$,
and append $t_i$ to $\vec{t}$.
Terms returned by $\procslrahh$ may involve the constant $\delta$.
We define a satisfiability relation between models and formulas involving $\delta$, as well as
the $\mathsf{max}$ and $\mathsf{min}$ function for terms involving $\delta$ in the obvious way,
such that
$( t_1 + c_1 \cdot \delta )^\I > ( t_2 + c_2 \cdot \delta )^\I$
if either $t_1^\I > t_2^\I$ or both $t_1^\I = t_2^\I$ and $c_1 > c_2$.

Overall, $\procslra$ returns a tuple of terms $\vec t$,
after which we add the instance $\normalize{\varphi[ \vec{\con k}, \vec t]}$ to $\Gamma$ in Figure~\ref{fig:proc-qi}.
\begin{longv}
We assume $\normalize{}$ eliminates occurrences of $\delta$ by the following transformations,
which is inspired by virtual term substitution~\cite{Loos93applyinglinear}:
\[\begin{array}{r@{}c@{}l@{}c@{}r@{}c@{}l@{}l}
\delta < t & \transform & 0 < t & \text{ and } &
\delta > t & \transform & 0 \geq t & \text{ where } \delta \not\in \fvars( t ).
\end{array}\]
\end{longv}
\begin{shortv}
We assume $\normalize{}$ eliminates occurrences of $\delta$ by
replacing literals containing $\delta$ whose solved form with respect to $\delta$ is of the form $(\lnot)(\delta \bowtie s)$ 
by $(\lnot)( 0 < s)$ if $\bowtie$ is $<$, and $(\lnot)( 0 \geq s)$ if $\bowtie$ is $>$.
This approach is inspired by virtual term substitution~\cite{Loos93applyinglinear}.
\end{shortv}

\begin{figure}[t]
\begin{framed}
$\procslra( \I, \Gamma, \lnot \varphi[\vec{ \con k }, \vec{\con e}] )$:
\begin{itemize}
\item[\ ] Return $\procslrah( \I, \lnot \varphi[\vec{ \con k }, \vec{\con e}], \vec{ \con e }, () )$
\end{itemize}

$\procslrah( \I, \psi, ( \con e_i, \ldots, \con e_n ), \vec{t} )$:
\begin{itemize}
\item[\ ] If $i>n$, return $\vec{t}$
\item[\ ] Otherwise, let $t_i = \procslrahh( \I, \psi, \con e_i )$, $\sigma = \{ \con e_i \mapsto t_i \}$
\item[\ ] Return $\procslrah( \I, \psi \sigma, ( \con e_{i+1}, \ldots, \con e_n ), ( \vec{t} \sigma, t_i ) )$
\end{itemize}

$\procslrahh( \I, \psi, \con e )$:
\begin{itemize}
\item[\ ] Let $M = M_\ell \cup M_u \cup M_c$ be such that:
  \begin{itemize}
  \item $\I \models M$ and $M \models_p \psi$,
  \item $M_\ell \Leftrightarrow \{ \con e \succ \ell_1, \ldots, \con e \succ \ell_n \}$,
  \item $M_u \Leftrightarrow \{ \con e \prec u_1, \ldots, \con e \prec u_m \}$, and
  \item $\con e \not\in \fvars( \ell_1, \ldots, \ell_n ) \cup \fvars( u_1, \ldots, u_m ) \cup \fvars( M_c )$.
  \end{itemize}
\item[\ ] Return one of
  $\begin{cases}
  \ell_i + \delta^\ell_i & n>0, \mathsf{max} \{ (\ell_1 + \delta^\ell_1 )^\I, \ldots, (\ell_n + \delta^\ell_n )^\I \} = (\ell_i + \delta^\ell_i)^\I \\
  u_j - \delta^u_j &  m>0, \mathsf{min} \{ (u_1 - \delta^u_1 )^\I , \ldots, ( u_m - \delta^u_m )^\I \} = (u_j - \delta^u_j)^\I \\
  0 & n=0 \text{ and } m=0
  \end{cases}$
\end{itemize}
\end{framed}
\vspace{-2ex}
\caption{A selection function $\procslra$ for arbitrary quantifier-free $\lra$-formula $\varphi[ \vec a, x ]$.
Each $\prec$ is either $<$ or $\leq$; $\delta^\ell_i$ is $\delta$ if the $i^{th}$ lower bound for $\con e$ is strict, and $0$ otherwise.
Similarly, each $\succ$ is either $>$ or $\geq$; $\delta^u_j$ is $\delta$ if the $j^{th}$ upper bound for $\con e$ is strict, and $0$ otherwise.
\label{fig:sel-lra}}
\end{figure}


\begin{lem}\label{lem:lra-finite}
$\procslra$ is finite for $\varphi[ \vec{ \con k }, \vec{\con e} ]$.
\end{lem}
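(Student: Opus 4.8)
The plan is to exhibit an explicit finite set $\procslra^\ast(\varphi[\vec{\con k},\vec{\con e}])$ that contains every possible output tuple $\vec t$ of $\procslra$, regardless of the input interpretation $\I$ and side constraint set $\Gamma$. Since $\procslra$ just calls $\procslrah$, which iterates $\procslrahh$ once per variable $\con e_i$ while accumulating substitutions, it suffices to bound the set of candidate terms produced at each of the $n = |\vec{\con e}|$ recursive levels and then close under composition of the (finitely many) substitutions.

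First I would analyze a single invocation $\procslrahh(\I,\psi,\con e)$. Its return value is, up to the $\delta$/$0$ annotations, one of the solved-form lower bounds $\ell_1,\dots,\ell_n$ or upper bounds $u_1,\dots,u_m$ drawn from a propositionally satisfying assignment $M$ for $\psi$, or the constant $0$. The key observation is that, although $M$ itself depends on $\I$, every literal of $M$ is (by definition of a propositionally satisfying assignment) a literal over the \emph{atoms of $\psi$}, and $\psi$ ranges over a fixed finite set — at the first level $\psi = \lnot\varphi[\vec{\con k},\vec{\con e}]$, and at later levels $\psi$ is obtained from it by applying substitutions from a fixed finite pool (see below). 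For a fixed $\psi$, the atoms are finitely many, each atom is an $\lra$-literal, and each such literal has a unique solved form with respect to $\con e$; hence the set of terms $\ell_i$ or $u_j$ (with their $\delta$ decorations, of which there are at most two per atom) that $\procslrahh$ can possibly return is a finite set determined by $\psi$ and $\con e$. Call this finite set $C(\psi,\con e)$, and add $0$ to it.

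Next I would handle the recursion. At level $i$, $\procslrah$ works on the formula $\psi_i = (\lnot\varphi[\vec{\con k},\vec{\con e}])\,\sigma_1\cdots\sigma_{i-1}$ where each $\sigma_j = \{\con e_j \mapsto t_j\}$ with $t_j \in C(\psi_j,\con e_j)$. By induction on $i$, the set of formulas $\psi_i$ reachable is finite, since at each step there are finitely many choices $t_j$; consequently the set of reachable pairs $(\psi_i,\con e_i)$ is finite, so $\bigcup_i C(\psi_i,\con e_i)$ is finite. The tuple $\vec t = (t_1\sigma_2\cdots\sigma_n,\; t_2\sigma_3\cdots\sigma_n,\;\dots,\;t_n)$ returned overall is then built from finitely many components, each lying in a finite set, and the back-substitutions applied to earlier components are themselves drawn from a finite pool; therefore the set of all possible $\vec t$ is finite. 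Taking $\procslra^\ast(\varphi[\vec{\con k},\vec{\con e}])$ to be this finite set establishes the claim.

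The main obstacle — and the only place the argument needs care — is making precise that the propositionally satisfying assignment $M$, despite being $\I$-dependent, contributes only \emph{atoms of $\psi$} and hence only finitely many distinct solved-form bounds; once one commits to the formulation in Figure~\ref{fig:sel-lra} that $M \models_p \psi$ with $M$ over the atoms of $\psi$, this is immediate, but it is worth stating explicitly so that the inductive closure argument over the reachable $\psi_i$ clearly terminates. Everything else (uniqueness of solved form, boundedness of the $\delta$-annotations, finiteness of a finite union of finite sets) is routine.
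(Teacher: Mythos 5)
Your proof is correct and follows essentially the same route as the paper's: finitely many atoms of $\psi$ yield finitely many solved-form bounds (plus $0$ and the $\delta$-decorations) that $\procslrahh$ can return, and the finitely many recursive calls of $\procslrah$ then give a finite set of possible output tuples. The only difference is that you spell out the inductive closure over the reachable substituted formulas $\psi_i$ more explicitly than the paper does, which is a harmless (and indeed clarifying) elaboration of the same argument.
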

\begin{longv}
\begin{proof}
We first show that only a finite number of terms can be returned by $\procslrahh( \I, (\lnot \varphi[\vec{\con k}, \vec{\con e}])\sigma, \con e_i )$
for any $\I, \sigma$.
Let $A$ be the set of atoms occurring in $\varphi[ \vec{ \con k }, \con e_i ] \sigma$.
The literals in satisfying assignments of $(\lnot \varphi[\vec{\con k}, \vec{\con e}] )\sigma$ are over these atoms.
Let $\{ \con e_i<\ell_1, \ldots, \con e_i<\ell_n, \con e_i>u_1, \ldots, \con e_i>u_m \}$ be the set of atoms
that are in solved form with respect to $\con e_i$ that are equivalent to the atoms of $A$ containing $\con e_i$,
where $\con e_i \not\in \fvars( \ell_1, \ldots, \ell_n, u_1, \ldots, u_m )$.
The terms returned by $\procslrahh( \I, ( \lnot \varphi[\vec{\con k}, \vec{\con e}] )\sigma, \con e_i )$ 
are in $\{ 0, \ell_1 (+ \delta ), \ldots, \ell_n (+ \delta ), u_1 (- \delta ), \ldots, u_m (- \delta ) \}$.
Since there are only a finite number of recursive calls to $\procslrah$ within $\procslra$,
and each call appends only a finite number of possible terms to $\vec t$,
the set of possible return values of $\procslra$ is finite, and thus it is finite for $\varphi[ \vec{ \con k }, \vec{\con e} ]$.
\qed
\end{proof}
\end{longv}

\begin{lem} \label{lem:procslrahh-model-preserve}
If $\I$ is a model for $\lra$ and quantifier-free formula $\psi$, 
then $\I$ is also a model for $\psi \{ \con e \mapsto \procslrahh( \I, \psi, \con e ) \}$.
\end{lem}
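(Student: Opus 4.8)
The plan is to analyze what $\procslrahh(\I,\psi,\con e)$ returns and check, in each of the three cases, that substituting that term for $\con e$ preserves satisfaction by $\I$. First I would unpack the hypothesis: $\I \models \psi$ and (by the definition of $\procslrahh$) there is a propositionally satisfying assignment $M = M_\ell \cup M_u \cup M_c$ with $\I \models M$, $M \models_p \psi$, where $M_\ell$ is equivalent to $\{\con e \succ \ell_1,\dots,\con e \succ \ell_n\}$, $M_u$ to $\{\con e \prec u_1,\dots,\con e \prec u_m\}$, and $\con e$ does not occur in the $\ell_i$, $u_j$, or $M_c$. Since $M \models_p \psi$, to show $\I \models \psi\{\con e \mapsto t\}$ it suffices to show $\I \models M\{\con e \mapsto t\}$ for the returned term $t$; and since $\con e \notin \fvars(M_c)$, the literals in $M_c$ are unaffected by the substitution and remain true in $\I$. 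So everything reduces to showing $\I \models (M_\ell \cup M_u)\{\con e \mapsto t\}$, i.e. that the chosen $t$ still lies above all lower bounds and below all upper bounds, evaluated in $\I$, using the extended ordering on $\delta$-terms.

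The key step is the case analysis on $t$. If $n = 0$ and $m = 0$, then $M_\ell \cup M_u = \emptyset$ and there is nothing to check. If $n > 0$ and $t = \ell_i + \delta^\ell_i$ is chosen so that $(\ell_i + \delta^\ell_i)^\I = \mathsf{max}\{(\ell_1 + \delta^\ell_1)^\I,\dots,(\ell_n+\delta^\ell_n)^\I\}$: for each lower-bound literal $\con e \succ \ell_j$ we have $(\ell_i + \delta^\ell_i)^\I \geq (\ell_j + \delta^\ell_j)^\I$, and one checks this implies $\ell_i + \delta^\ell_i \succ \ell_j$ holds in $\I$ (if $\succ$ is $>$ the strict $\delta^\ell_j$ contributes, if $\succ$ is $\geq$ then $\delta^\ell_j = 0$ and $\geq$ follows directly from the max inequality). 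For each upper bound $\con e \prec u_k$, note that since $\I \models M$ we have $\con e^\I \succ \ell_j$ for all $j$ and $\con e^\I \prec u_k$ for all $k$, so in particular $\con e^\I$ is a witness lying strictly (or weakly) between every $\ell_j+\delta^\ell_j$ and every $u_k - \delta^u_k$; hence $(\ell_i + \delta^\ell_i)^\I \leq \con e^\I$ with the right strictness, and therefore $\ell_i + \delta^\ell_i \prec u_k$ holds in $\I$ by transitivity through $\con e^\I$. The case $m>0$ with $t = u_j - \delta^u_j$ the minimal upper bound is symmetric.

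I would organize this by first stating the reduction ``$\I \models \psi\{\con e\mapsto t\} \Leftarrow \I \models (M_\ell \cup M_u)\{\con e\mapsto t\}$'' as the backbone, then doing the $\mathsf{max}/\mathsf{min}$ argument once and invoking symmetry. The main obstacle I anticipate is the bookkeeping around $\delta$: one must be careful that the $\mathsf{max}$ over $\delta$-terms genuinely gives $\ell_i + \delta^\ell_i$ dominating each $\ell_j + \delta^\ell_j$ in the sense needed for the original strictness flag $\succ$ of literal $j$ (not of literal $i$), and likewise that $\con e^\I$ itself, sitting between the true lower and upper bounds, can be used to carry transitivity across to the $u_k$ side with the correct strict/non-strict relation. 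These follow from the stipulated semantics of $\delta$-terms — namely $(t_1 + c_1\delta)^\I > (t_2 + c_2\delta)^\I$ iff $t_1^\I > t_2^\I$ or ($t_1^\I = t_2^\I$ and $c_1 > c_2$) — but spelling out the finitely many strictness subcases is the only genuinely fiddly part; the rest is immediate.
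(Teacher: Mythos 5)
Your proposal is correct and follows essentially the same route as the paper's proof: reduce to showing $\I \models M\{\con e \mapsto t\}$ via $M \models_p \psi$, note $M_c$ is untouched, use the $\mathsf{max}$ property (in the extended $\delta$-ordering) to discharge $M_\ell$, and use transitivity through $\con e^\I$ together with the strict/non-strict case split on $\delta^\ell_i$ to discharge $M_u$, with the $m>0$ and $n=m=0$ cases handled symmetrically and trivially. The only difference is presentational: the paper folds the transitivity-through-$\con e^\I$ step into a direct claim that $\ell_i^\I < u_j^\I$ (resp.\ $\leq$) in the strict (resp.\ non-strict) subcase, whereas you make the witness role of $\con e^\I$ explicit.
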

\begin{longv}
\begin{proof}
Let $M$ be a set of literals of the form described in the definition of $\procslrahh$ for $\I$, $\psi$ and $\con e$.
Consider the case where $\procslrahh( \I, \psi, \con e ) = \ell_i + \delta^\ell_i$ for some $i$, where $n>0$.
We show that $\I$ satisfies $M \{ \con e \mapsto \ell_i + \delta^\ell_i \}$.
First, since $\mathsf{max} \{ (\ell_1 + \delta^\ell_1)^\I , \ldots, (\ell_n + \delta^\ell_n)^\I \} = (\ell_i + \delta^\ell_i)^\I$,
we know that $\I$ satisfies $M_\ell \{ \con e \mapsto \ell_i + \delta^\ell_i \}$.
In the case that the bound on $\con e$ we consider is strict, that is,
$\con e > \ell_i \in M_\ell$, then $\delta^\ell_i$ is $\delta$, and $\ell_i^\I < u_j^\I$ for all $j \in \{ 1, \ldots, m \}$.
Thus, $\I$ satisfies $(\ell_i + \delta \prec u_j) = (\con e \prec u_j) \{ \con e \mapsto \ell_i + \delta \}$.
In the case that the bound on $\con e$ we consider is non-strict, that is,
if $\con e \geq \ell_i \in M_\ell$, then $\delta^\ell_i$ is $0$, and $\ell_i^\I \leq u_j^\I$ for all $j \in \{ 1, \ldots, m \}$.
Thus, $\I$ satisfies $(\ell_i \prec u_j) = (\con e \prec u_j) \{ \con e \mapsto \ell_i \}$.
In either case, we have that $\I$ satisfies each literal in $M_u \{ \con e \mapsto \ell_i + \delta^\ell_i \}$.
Finally, $\I$ clearly satisfies $M_c \{ \con e \mapsto \ell_i + \delta^\ell_i \} = M_c$.
The case when $m>0$ is symmetric to the case when $n>0$.
In the case where $n=0$ and $m=0$, 
we have that $\psi$ does not contain $\con e$, and $\I$ satisfies $M \{ \con e \mapsto 0 \}$.
In each case, $\I$ satisfies $M \{ \con e \mapsto \procslrahh( \I, \psi, \con e ) \}$,
which entails $\psi \{ \con e \mapsto \procslrahh( \I, \psi, \con e ) \}$, and thus the lemma holds.
\qed
\end{proof}
\end{longv}

\begin{lem}\label{lem:lra-monotonic}
$\procslra$ is model-preserving for $\varphi[ \vec{ \con k }, \vec{\con e} ]$.
\end{lem}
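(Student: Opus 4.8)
The plan is to show that $\procslra$ is model-preserving for $\varphi[\vec{\con k},\vec{\con e}]$, from which monotonicity follows by Lemma~\ref{lem:sel-mm-m}. Unfolding the definition, we must show that whenever $\I \models \Gamma \cup \{\lnot\varphi[\vec{\con k},\vec{\con e}]\}$ and $\procslra(\I,\Gamma,\lnot\varphi[\vec{\con k},\vec{\con e}]) = \vec t$, then $\I \models \lnot\varphi[\vec{\con k},\vec t]$. Since $\procslra$ just returns $\procslrah(\I,\lnot\varphi[\vec{\con k},\vec{\con e}],\vec{\con e},())$, the core of the argument is an induction on the recursion of $\procslrah$, with Lemma~\ref{lem:procslrahh-model-preserve} supplying the single-step invariant.

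First I would set up the invariant for $\procslrah(\I,\psi,(\con e_i,\ldots,\con e_n),\vec t)$: if $\I \models \psi$, then the returned tuple $\vec s$ satisfies $\I \models \psi\{\con e_i \mapsto s_i, \ldots, \con e_n \mapsto s_n\}$ (using the appropriate composition of substitutions, matching how $\vec t\sigma$ threads previously chosen terms through). The base case $i>n$ is immediate since $\psi$ is returned unchanged with the accumulated substitution already applied. For the inductive step, let $t_i = \procslrahh(\I,\psi,\con e_i)$; by Lemma~\ref{lem:procslrahh-model-preserve}, $\I \models \psi\{\con e_i \mapsto t_i\} = \psi\sigma$. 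The recursive call is on $\psi\sigma$ with the remaining variables, so the induction hypothesis applies and yields $\I$ satisfying the fully substituted formula. One has to be a little careful that $\psi$ is quantifier-free throughout (so Lemma~\ref{lem:procslrahh-model-preserve} is applicable at each step) and that substituting a $\delta$-containing term into a quantifier-free $\lra$-formula still yields a formula in the extended language over which the $\delta$-aware satisfaction relation is defined; both are guaranteed by the construction and the remarks preceding the lemma on how $\I \models$ is extended to $\delta$-terms.

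Applying the invariant with $\psi = \lnot\varphi[\vec{\con k},\vec{\con e}]$ — which holds in $\I$ by hypothesis — gives $\I \models (\lnot\varphi[\vec{\con k},\vec{\con e}])\{\vec{\con e}\mapsto\vec t\} = \lnot\varphi[\vec{\con k},\vec t]$, which is exactly model-preservation. Monotonicity is then immediate from Lemma~\ref{lem:sel-mm-m}. The main obstacle is bookkeeping rather than conceptual: getting the simultaneous-substitution semantics right in the induction, i.e.\ verifying that the way $\procslrah$ applies $\sigma$ to the accumulator $\vec t$ and to $\psi$ actually composes to the substitution $\{\vec{\con e}\mapsto\vec t\}$ on the original formula, given that later bound terms $t_j$ may themselves mention earlier $\con e$'s only through the already-substituted forms. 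Once that is pinned down, each step is a direct appeal to Lemma~\ref{lem:procslrahh-model-preserve}.
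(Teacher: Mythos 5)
Your proposal matches the paper's proof exactly: the paper also establishes model-preservation by ``the definition of $\procslrah$ and repeated applications of Lemma~\ref{lem:procslrahh-model-preserve}'', which is precisely the induction on the recursion that you spell out. Your additional care about how the accumulated substitutions compose is a correct and welcome elaboration of the same argument, not a different route.
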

\begin{proof}
By the definition of $\procslrah$ and repeated applications of Lemma~\ref{lem:procslrahh-model-preserve}.
\qed
\end{proof}

\begin{thm}
$\mathcal{P}_{\procslra}$ is a sound and complete procedure 
for determining the $\lra$-satisfiability of $\exists \vec a\, \forall \vec x\, \varphi[\vec a, \vec x]$.
\end{thm}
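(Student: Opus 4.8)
The plan is to instantiate Theorem~\ref{lem:sel-fin-mono} with the selection function $\procslra$. That theorem states that if $\mathcal{S}$ is finite and monotonic for $\varphi[\vec{\con k}, \vec{\con e}]$ in $\lan$, then $\mathcal{P}_{\mathcal{S}}$ is a terminating decision procedure for the $T$-satisfiability of $\exists \vec a\, \forall \vec x\, \varphi[\vec a, \vec x]$. So the entire argument reduces to checking the three hypotheses: (i) $\lra$ is the relevant language and $\procslra$ is a legal selection function for it; (ii) $\procslra$ is finite; and (iii) $\procslra$ is monotonic. Soundness (the ``sat''/``unsat'' answers are correct) comes for free from Lemmas~\ref{lem:proc-qi-sound-unsat} and~\ref{lem:proc-qi-sound-sat}, which hold for \emph{any} selection function, and termination plus correctness together give completeness.

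First I would observe that $\lan$ here (arbitrary quantifier-free $\lra$-formulas under $\exists\vec a\,\forall\vec x$) is closed under negation and that $\lra$-satisfiability of finite sets of quantifier-free formulas is decidable, so the setting of Section~\ref{sec:qi} applies; one must also note that $\procslra$ indeed returns a tuple $\vec t[\vec{\con k}]$ such that $\normalize{\varphi[\vec{\con k},\vec t[\vec{\con k}]]}$ is again a quantifier-free $\lra$-formula — here the role of $\normalize{}$ is exactly to eliminate the auxiliary infinitesimal constant $\delta$ via the virtual-term-substitution-style rewrites described before Figure~\ref{fig:sel-lra}, so that the result stays in $\lan$. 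Then I would cite Lemma~\ref{lem:lra-finite} for finiteness of $\procslra$, and Lemma~\ref{lem:lra-monotonic} together with Lemma~\ref{lem:sel-mm-m} for monotonicity (Lemma~\ref{lem:lra-monotonic} gives the stronger model-preserving property, and Lemma~\ref{lem:sel-mm-m} downgrades it to monotonicity). Applying Theorem~\ref{lem:sel-fin-mono} then immediately yields that $\mathcal{P}_{\procslra}$ terminates and decides $\lra$-satisfiability of $\exists\vec a\,\forall\vec x\,\varphi[\vec a,\vec x]$, which is the claim.

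There is no serious obstacle at this point: the theorem is essentially a corollary, since all the substantive work has been pushed into the supporting lemmas. The one place that requires a word of care is that $\procslra$ is \emph{non-deterministic} — both in its choice of a propositionally satisfying assignment $M$ for $\psi$ and in the tie-breaking among lower/upper bounds that attain the extremal value in $\I$ — so strictly speaking ``$\procslra$'' denotes any function consistent with the specification in Figure~\ref{fig:sel-lra}. I would note that Lemmas~\ref{lem:lra-finite} and~\ref{lem:lra-monotonic} are stated for every such resolution of the nondeterminism, so the conclusion holds uniformly regardless of how the implementation makes these choices; equivalently, the finite set $\mathcal{S}^\ast(\varphi[\vec{\con k},\vec{\con e}])$ of possible outputs is fixed in advance, bounding the number of distinct instances that can ever be added to $\Gamma$ and hence the number of loop iterations. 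Assembling these observations completes the proof.
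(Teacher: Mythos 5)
Your proposal is correct and matches the paper's own proof, which likewise derives the theorem by combining Theorem~\ref{lem:sel-fin-mono} and Lemma~\ref{lem:sel-mm-m} with the $\lra$-specific Lemmas~\ref{lem:lra-finite} and~\ref{lem:lra-monotonic}. The additional remarks on $\normalize{}$ keeping instances in $\lan$ and on the nondeterminism of $\procslrahh$ are sensible but not needed beyond what the cited lemmas already cover.
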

\begin{proof}
By Theorem~\ref{lem:sel-fin-mono} and Lemma~\ref{lem:sel-mm-m} of our framework as well as $\lra$-specific 
Lemma~\ref{lem:lra-finite} and Lemma~\ref{lem:lra-monotonic}.
\qed
\end{proof}

\begin{shortv}
\ 

\noindent The proof of Lemma~\ref{lem:lra-finite} relies on the fact that only a fixed finite set of possible terms can be returned by $\procslrahh$, 
regardless of $\I$\footnote{For full details, see~\cite{RKK2015}.}. Unlike test terms in conventional quantifier elimination, the terms in our approach are generated 
one by one and only when
needed.
Lemma~\ref{lem:procslrahh-model-preserve} is the crux of the correctness of the approach: it gives
semantic argument for progress in the finite but large space of instantiations.
The proof relies on the fact that 
if $\I$ interprets $\con e$ such that $\con e^\I$ satisfies all lower bounds in $M_\ell$ (resp. upper bounds in $M_u$), 
we may replace $\con e^\I$ with the minimal (resp. maximal) value among the bounds for $\con e$ according to $\I$.
The lemma is not a trivial fact about instantiation of quantifiers;
$\psi$ is not quantified. The term returned from
$\procslra$ ``generalizes'' the value of $\con e$ in a $\I$ to an expression in terms of 
other variables, while keeping the formula true in $\I$. Figure~\ref{fig:proc-qi} then adds a formula instance that 
blocks the model $\I$ (and others).
\end{shortv}

\begin{shortv}
We illustrate the procedure through examples\footnote{A larger set of examples can be found in the extended version of this report~\cite{RKK2015}.}.
\end{shortv}
\begin{longv}
We illustrate the procedure through examples.
\end{longv}
$\procslrahh$ is non-deterministic;
we choose instantiations only based on the lower bounds $M_\ell$ found in the procedure $\procslrahh$,
though the procedure is free to base its instantiations on the upper bounds $M_u$ as well.
We underline the literal in $M_\ell$ corresponding to the bound whose value is maximal in $\I$.
$\Gamma$ is initially empty and on each iteration $\Gamma'$ is the union of $\Gamma$ and the skolemized negation of the input formula.
Each round of $\procslra$ computes a tuple $\vec t[\vec{\con k}]$, which is used to instantiate our quantified formula in Figure~\ref{fig:proc-qi}.
The last column shows the corresponding instance of the quantified formula after simplification,
including the elimination of $\delta$.

\begin{shortv}
\begin{example}
Consider the formula $\forall x\, ( x < b \vee x > a )$.
The negated skolemized form (denoted $\lnot \varphi[\vec{\con k}, \con e]$ in Figure~\ref{fig:proc-qi}) of this formula 
is equivalent to the formula $e \geq b \wedge e \leq a$, where $e$ is a fresh constant.
A possible run of $\mathcal{P}_{\procslra}$ on this input is as follows.
\[\begin{array}{c@{\hspace{1em}}c@{\hspace{1em}}c|@{\hspace{1em}}c@{\hspace{1em}}c@{\hspace{1em}}c@{\hspace{1em}}|c@{\hspace{1em}}l}
\hline
 &  &  & \multicolumn{3}{c|}{\procslrahh( \I, \Gamma, \con e )} &   \\
\text{\#} & \Gamma & \Gamma' & \con{e} & M_\ell & \text{return} & \vec t[\vec{\con k}] & \text{Add to } \Gamma  \\
\hline
 1 & \text{sat} & \text{sat} & e & \{ \underline{e \geq b} \} & b & (b) & b > a
 \\
 2 & \text{sat} & \text{unsat} & & & & \\
\hline
\end{array}
\]
In step 1, we add to $\Gamma$ the instance $b < b \lor b > a$, which simplifies to $b > a$.
In step 2, note that $\Gamma' = \{ b > a, b \le e \land e \le a \}$, which is unsatisfiable.
The run establishes that $\exists ab\, \forall x\, ( x > a \vee x < b )$ is $\lra$-satisfiable.
\qed
\end{example}
\end{shortv}

\begin{example}
To demonstrate how non-strict bounds are handled,
consider the formula $\forall x\, ( x \leq a \vee x \leq b )$,
whose skolemized negation is $e > a \wedge e > b$.
A possible run of $\mathcal{P}_{\procslra}$ on this input is as follows.
\[\begin{array}{c@{\hspace{1em}}c@{\hspace{1em}}c|@{\hspace{1em}}c@{\hspace{1em}}c@{\hspace{1em}}c@{\hspace{1em}}|c@{\hspace{1em}}l}
\hline
 &  &  & \multicolumn{3}{c|}{\procslrahh( \I, \Gamma, \con e )} &   \\
\text{\#} & \Gamma & \Gamma' & \con{e} & M_\ell & \text{return} & \vec t[\vec{\con k}] & \text{Add to } \Gamma  \\
\hline
 1 & \text{sat} & \text{sat} & e & \{ \underline{e > a}, e > b \} & a + \delta & (a+\delta) & a < b
 \\
 2 & \text{sat} & \text{sat} & e & \{ e > a, \underline{e>b} \} & b + \delta & (b+\delta) & b < a
 \\
 3 & \text{unsat} & & & & & \\
\hline
\end{array}
\]
This run shows $\exists ab\, \forall x\, ( a \geq x \vee x \geq b )$ is $\lra$-unsatisfiable.
The disjuncts of the instance $a+\delta \leq a \vee a+\delta \leq b$ added to $\Gamma$ on the first iteration
simplify to $\bot$ and $a < b$ respectively.  We similarly obtain $b < a$ on the second iteration.
\qed
\end{example}

\begin{example}
To demonstrate how multiple universally quantified variables are handled,
consider the formula $\forall xy\, ( x+y < a \vee x-y < b )$
whose skolemized negation is $e_1+e_2 \geq a \wedge e_1-e_2 \geq b$.
A possible run of $\mathcal{P}_{\procslra}$ 
is as follows.
\[\begin{array}{c@{\hspace{1em}}c@{\hspace{1em}}c|@{\hspace{1em}}c@{\hspace{1em}}c@{\hspace{1em}}c@{\hspace{1em}}|c@{\hspace{1em}}l}
\hline
 &  &  & \multicolumn{3}{c|}{\procslrahh( \I, \Gamma, \con e )} &   \\
\text{\#} & \Gamma & \Gamma' & \con{e} & M_\ell  & \text{return} & \vec t[\vec{\con k}] & \text{Add to } \Gamma  \\
\hline
 1 & \text{sat} & \text{sat} & 
 e_1 & \{ e_1 \geq a - e_2, \underline{e_1 \geq b + e_2} \} & b + e_2 & 
 \\
  & & & 
 e_2 & \{ \underline{e_2 \geq \frac{a-b}{2}} \} & \frac{a-b}{2} & ( \frac{a+b}{2},\frac{a-b}{2} ) & \bot
 \\
 2 & \text{unsat} & & & & \\
\hline
\end{array}
\]
This run shows $\exists ab\, \forall xy\, ( x+y < a \vee x-y < b )$ is $\lra$-unsatisfiable.
The substitution for $e_2$ is chosen based on $M_\ell$ after
applying the substitution $\{ e_1 \mapsto b + e_2 \}$.
\qed
\end{example}

\begin{longv}
\begin{example}
To demonstrate how quantified formulas with Boolean structure are handled,
consider the formula $\forall x\, ( ( a < x \wedge x < b ) \vee x < a+b )$
whose skolemized negation is $( a \geq e \vee e \geq b ) \wedge e \geq a+b$.
A possible run of $\mathcal{P}_{\procslra}$ 
is as follows.
\[\begin{array}{c@{\hspace{1em}}c@{\hspace{1em}}c|@{\hspace{1em}}c@{\hspace{1em}}c@{\hspace{1em}}c@{\hspace{1em}}|c@{\hspace{1em}}l}
\hline
 &  &  & \multicolumn{3}{c|}{\procslrahh( \I, \Gamma, \con e )} &   \\
\text{\#} & \Gamma & \Gamma' & \con{e} & M_\ell  & \text{return} & \vec t[\vec{\con k}] & \text{Add to } \Gamma  \\
\hline
 1 & \text{sat} & \text{sat} & 
 e & \{ \underline{e \geq a+b} \} & a+b & (a+b) & 0 < b \wedge a < 0
 \\
 2 & \text{sat} & \text{sat} & 
 e & \{ \underline{ e \geq b}, e \geq a+b \} & b & (b) & 0 < a
 \\
 3 & \text{unsat} & & & &
 \\
\hline
\end{array}
\]
This run shows $\exists ab\, \forall x\, ( ( a < x \wedge x < b ) \vee x < a+b )$ is $\lra$-unsatisfiable.
On the first iteration, 
we assume that the propostionally satisfying assignment for $\Gamma'$ included $a \geq e$,
and hence $e \geq b$ is not included as a lower bound on that iteration.
On the second iteration, 
the solver must satisfy both $b > 0$ and $a < 0$, 
which implies the model $\I$ is such that
$(a+b)^{\I} > a^{\I}$ hence $e \geq b$ must exist in $M_\ell$,
and moreover $b^{\I} > (a+b)^{\I}$ hence $b$ must be the maximal lower bound for $e$.
\qed
\end{example}

\begin{example}
To demonstrate a case where a variable has no bounds,
consider the formula $\forall xy\, x \leq y$,
whose skolemized negation is $e_1 > e_2$.
A possible run of $\mathcal{P}_{\procslra}$ on this input is as follows.
\[\begin{array}{c@{\hspace{1em}}c@{\hspace{1em}}c|@{\hspace{1em}}c@{\hspace{1em}}c@{\hspace{1em}}c@{\hspace{1em}}|c@{\hspace{1em}}l}
\hline
 &  &  & \multicolumn{3}{c|}{\procslrahh( \I, \Gamma, \con e )} &   \\
\text{\#} & \Gamma & \Gamma' & \con{e} & M_\ell  & \text{return} & \vec t[\vec{\con k}] & \text{Add to } \Gamma  \\
\hline
 1 & \text{sat} & \text{sat} & 
 e_1 & \{ \underline{e_1 > e_2} \} & e_2 + \delta & 
 \\
  & & & 
 e_2 & \emptyset & 0 & ( \delta, 0 ) & \bot
 \\
 2 & \text{unsat} & & & &
 \\
\hline
\end{array}
\]
This run shows $\forall xy\, x > y$ is $\lra$-unsatisfiable.
Notice that after the substitution $\{ e_1 \mapsto e_2 + \delta \}$, 
we have that $\Gamma'$ contains neither an upper nor a lower bound for $e_2$, and hence we choose to return the value $0$.
\qed
\end{example}

\begin{example}
To demonstrate a non-trivial case using the infinitesimal $\delta$,
consider the formula $\forall xy\, (x \leq 0 \vee y - 2 \cdot x \leq 0)$
whose skolemized negation is $e_1  > 0 \wedge e_2 - 2 \cdot e_1 > 0$.
A possible run of $\mathcal{P}_{\procslra}$ on this input is as follows.
\[\begin{array}{c@{\hspace{1em}}c@{\hspace{1em}}c|@{\hspace{1em}}c@{\hspace{1em}}c@{\hspace{1em}}c@{\hspace{1em}}|c@{\hspace{1em}}l}
\hline
 &  &  & \multicolumn{3}{c|}{\procslrahh( \I, \Gamma, \con e )} &   \\
\text{\#} & \Gamma & \Gamma' & \con{e} & M_\ell  & \text{return} & \vec t[\vec{\con k}] & \text{Add to } \Gamma  \\
\hline
 1 & \text{sat} & \text{sat} & 
 e_1 & \{ \underline{e_1 > 0} \} & \delta & 
 \\
  & & & 
 e_2 & \{ \underline{e_2 > 2 \cdot \delta} \} & 3 \cdot \delta & ( \delta, 3 \cdot \delta ) & \bot
 \\
 2 & \text{unsat} & & & &
 \\
\hline
\end{array}
\]
This run shows $\forall xy\, x \leq 0 \vee y - 2 \cdot x \leq 0$ is $\lra$-unsatisfiable.
\qed
\end{example}
\end{longv}

The procedure $\mathcal{P}_{\procslra}$, which is an instance of the procedure in Figure~\ref{fig:proc-qi},
can be understood as lazily enumerating the disjuncts of the Loos-Weispfenning method 
for quantifier elimination over linear real arithmetic~\cite{Loos93applyinglinear},
with minor differences\footnote{
For instance, that method uses a distinguished term $\infty$ representing an arbitrarily large positive value.
\begin{shortv}
A more technical and thorough comparison of how our approaches for \lra and \lia relate to existing approaches can be found in~\cite{RKK2015}.
\end{shortv}
}.
In this way, our approach is similar to the projection-based procedures described in~\cite{komuravelli2014smt,bjornerplaying}.
These approaches compute implicants of quantified formulas, while our approach instead computes a term which is in turn used for instantiation.
This choice is important for our purposes, as 
it enables a uniform combination of the approach with existing instantiation-based techniques for first-order logic~\cite{ganzinger2003new,Detlefs03simplify:a,reynolds14quant_fmcad},
and allows the approach to be used as a subprocedure for synthesis as described in~\cite{ReynoldsDKBT15Cav}.

\begin{figure}[t]
\begin{framed}
Return
  $\begin{cases}
  \frac{u_j + \ell_i}{2} & n>0 \text{ and } m>0 \\
  \ell_i + 1 &  n>0 \text{ and } m=0  \\
  u_j - 1 & n=0 \text{ and } m>0 \\
  0 & n=0 \text{ and } m=0
  \end{cases}$, where 
  $\begin{cases}
\mathsf{max} \{ \ell_1^\I, \ldots, \ell_n^\I \} = \ell_i^\I \text{ if } n>0 \\
\mathsf{min} \{ u_1^\I , \ldots, u_m^\I \} = u_j^\I \text{ if } m>0.
\end{cases}$.
\end{framed}
\vspace{-2ex}
\caption{An alternative return value for $\procslrahh$.
\label{fig:sel-lra2}}
\end{figure}

An alternative return value for $\procslrahh$ is presented in Figure~\ref{fig:sel-lra2}.
When using this return value, the procedure $\mathcal{P}_{\procslra}$
enumerates the disjuncts of Ferrante and Rackoff's method for quantifier elimination over linear real arithmetic~\cite{FerranteRackoff79ComputationalComplexityLogicalTheories},
with minor differences.
We provide an experimental evaluation of both of these selection functions in Section~\ref{sec:results}.

\begin{longv}
\subsection{Comparison to Existing Approaches}

As mentioned, at their core, most approaches for solving quantified linear arithmetic (including ours) 
share many similarities with one another.
In particular, given an existentially quantified formula $\exists \vec x. \varphi$, based on some strategy, 
they enumerate (possibily lazily) a finite set of ground formulas that are entailed by this formula.
We give a brief overview contrasting the technical details of existing approaches in this section.

We have mentioned that the approach in Figure~\ref{fig:sel-lra} involves the use of a free distinguished constant $\delta$,
representating an infinitessmal positive value.
Other approaches also involve use of a free distinguished constant $\infty$, representing an arbitrarily large positive value.
Like $\delta$, this term can be eliminated, as follows:
\[\begin{array}{r@{}c@{}l@{}c@{}r@{}c@{}l@{}l}
\infty < t & \transform & \bot & \text{ and } &
\infty > t & \transform & \top & \text{ where } \infty \not\in \fvars( t ).
\end{array}\]

\begin{figure}[t]
\begin{framed}
Return one of
  $\begin{cases}
  l_j + \delta & n>0 \\
  u_j - \delta & m>0 \\
  \infty & m=0  \\
  -\infty & n=0
  \end{cases}$, where 
  $\begin{cases}
\mathsf{max} \{ \ell_1^\I, \ldots, \ell_n^\I \} = \ell_i^\I \text{ if } n>0 \\
\mathsf{min} \{ u_1^\I , \ldots, u_m^\I \} = u_j^\I \text{ if } m>0.
\end{cases}$.
\end{framed}
\vspace{-2ex}
\caption{An alternative return value for $\procslrahh$ that is analogous to Loos and Weispfenning's method.
\label{fig:sel-lra3}}
\end{figure}

\begin{figure}[t]
\begin{framed}
Return
  $\begin{cases}
  \frac{u_j + \ell_i}{2} & n>0 \text{ and } m>0 \\
  \infty & m=0 \\
  -\infty & n=0
  \end{cases}$, where 
  $\begin{cases}
\mathsf{max} \{ \ell_1^\I, \ldots, \ell_n^\I \} = \ell_i^\I \text{ if } n>0 \\
\mathsf{min} \{ u_1^\I , \ldots, u_m^\I \} = u_j^\I \text{ if } m>0.
\end{cases}$.
\end{framed}
\vspace{-2ex}
\caption{An alternative return value for $\procslrahh$ that is analogous to Ferrante and Rackoff's method.
\label{fig:sel-lra4}}
\end{figure}

The two most widely known algorithms for quantifier elimination for linear real arithmetic 
are the method based on infinitessimals in~\cite{Loos93applyinglinear},
and the method based on an interior point method in~\cite{FerranteRackoff79ComputationalComplexityLogicalTheories}.
To put these algorithms into the context of our approach,
we provide two additional alternatives for the return value of $\procslrahh$ (Figures~\ref{fig:sel-lra3} and~\ref{fig:sel-lra4})
that closely approximate the effect of these methods.

Recent approaches are inspired by of one (or both) of these methods.
The approaches described in~\cite{komuravelli2014smt,bjornerplaying} are closely based on the Loos-Weispfenning method,
and the approach described in~\cite{dutertresolving} is closely based on Ferrante-Rackoff method.
The approach described in~\cite{nipkow2008linear} examines a certified version of both approaches.

As mentioned, Figure~\ref{fig:sel-lra} is inspired by the Loos-Weispfenning method, but does not use infinities.
Similarly, the return value in Figure~\ref{fig:sel-lra2} is inspired by Ferrante-Rackoff method, but does not use infinities.
One possible advantage of the approach in Figure~\ref{fig:sel-lra2} is in the context of quantifier alternation.
In particular, that selection function does not use any virtual terms,
and thus we may consider instances of quantified formulas having nested quantification where eliminating virtual terms is not obvious.
A complete strategy for quantifier alternation using this selection function is the subject of future work.

\end{longv}

\section{Instantiation for Quantifier-Free $\lia$-Formulas}
\label{sec:lia}
We now turn our attention to the class of arbitrary $\lia$-formulas $\exists \vec a\, \forall x\, \varphi[ \vec a, \vec x ]$,
$\vec x$ and $\vec a$ are vectors of $\Int$ variables, and where $\varphi[ \vec a, \vec x ]$ is quantifier-free.
We again assume all equalities are eliminated from $\varphi$ by replacing them with a conjunction of inequalities.

Figure~\ref{fig:sel-lia} gives a selection function $\procslia$ for $\lia$.
The procedure invokes the recursive procedure $\procsliah$,
which takes as arguments $\I$, $\lnot \varphi[\vec{ \con k },\vec{\con e}]$, 
variables $\vec{ \con e }$ that we have yet to incorporate into the substitutions,
an integer $\theta$,
terms $\vec{ t }$ found as substitutions for variables from $\vec{ \con e }$ so far,
and a tuple of symbols $\vec{p}$ from $\{ +, - \}$ which we refer to as \emph{polarities}.
The role of $\theta$ will be to capture divisibility relationships through the procedure,
where $\theta$ is initially $1$.
The procedure invokes a call to $\procsliahh( \I, \psi, \con e_i )$ which based on the propositionally satisfying assignment for $\psi$
returns a tuple of the form $(c, t_i, p_i)$, where $c$ is a constant, $t_i$ is a term, and $p_i$ is a polarity.
The procedure for constructing the term $t_i$ in the procedure $\procsliahh$ is similar 
to the procedure $\procslrahh$ in the previous section,
where we find the lower bound of the form $c_i \cdot \con e \geq \ell_i$ such that the (rational) value $( \frac{\ell_i}{ c_i } )^\I$ is maximal, and similarly for $M_u$.
Additionally, $\procslrahh$ adds a constant $\rho$ to the maximal lower bound (resp. minimal lower bound).
This constant ensures that the returned term $t_i$ and $\con e$ are congruent modulo $\theta \cdot c$ in $\I$,
a fact which in part suffices to show the overall function to be model-preserving.
It then constructs a \emph{substitution with coefficients} $\sigma$ of the form $\{ c \cdot \con e_i \mapsto t_i \}$.
A substitution of this form may be applied to integer terms of the form $c \cdot ( d \cdot \con e_i + s )$ where $\con e_i \not\in \fvars( s )$,
where $( c \cdot ( d \cdot \con e_i + s ) ) \sigma$ is defined as $d \cdot t_i + c \cdot s$.
Additionally, we define $( s_1 \bowtie s_2 )\sigma$ as $( c \cdot s_1 ) \sigma \bowtie ( c \cdot s_2 ) \sigma$ for $\bowtie \in \{ <, > \}$, 
and thus we can apply $\sigma$ to arbitrary $\lia$-formulas.
After constructing $\sigma$, the procedure $\procsliah$ invokes a recursive call where
$\sigma$ is applied to $\psi$ and $(c \cdot \vec{t})$,
$\theta$ is multiplied by $c$,
the term $\theta \cdot t_i$ is appended to $\vec{t}$,
and $p_i$ is appended to $\vec{p}$.

Overall, $\procsliah$ returns a vector of terms $(\vec t\ \opdivd^{\vec p} \ \theta)$, that is,
integer division applied pairwise to the terms in $\vec{t}$ and the constant $\theta$,
where $\vec{p}$ determines whether this division rounds up or down.
We add the instance $\normalize{\varphi[ \vec{\con k}, \vec t\ \opdivd^{\vec p} \ \theta]}$ to $\Gamma$ in Figure~\ref{fig:proc-qi},
where occurrences of integer division are eliminated by defining
$\normalize{\varphi[ \vec{\con k}, \vec t\ \opdivd^{\vec p} \ \theta]}$ as 
$\varphi[ \vec{\con k}, \vec{\con d}] \wedge \theta \cdot \vec{\con d} \teq \vec{t} \pm^{\vec p} \vec{\con m} \wedge 0 \leq \vec{\con m} < \theta$,
where $\vec{\con d}$ and $\vec{\con m}$ are distinct fresh constants, and $\pm^p$ is $+$ if $p$ is $+$ and analogously for $-$.
Note that our selection function chooses $\vec{p}$ such that integer division rounds up for terms coming from lower bounds,
and rounds down for terms combing from upper bounds.
\begin{longv}
This choice is not required for correctness, 
but can reduce the number of instances needed for showing unsatisfiability.
\end{longv}




\begin{figure}[t]
\begin{framed}
$\procslia( \I, \Gamma, \lnot \varphi[\vec{ \con k }, \vec{\con e}] )$:
\begin{itemize}
\item[\ ] Return $\procsliah( \I, \lnot \varphi[\vec{ \con k }, \vec{\con e}], \vec{ \con e }, 1, (), () )$
\end{itemize}

$\procsliah( \I, \psi, ( \con e_i, \ldots, \con e_n ), \theta, \vec t, \vec p )$:
\begin{itemize}
\item[\ ] If $i>n$, return $\vec t\ \opdivd^{\vec p} \ \theta$
\item[\ ] Otherwise, let $( c, t_i, p_i) = \procsliahh( \I, \psi, \con e_i, \theta )$, $\sigma = \{ c \cdot \con e_i \mapsto t_i \}$
\item[\ ] Return $\procsliah( \I, \psi \sigma, ( \con e_{i+1}, \ldots, \con e_n ), \theta \cdot c, ( ( c \cdot \vec{t} ) \sigma, \theta \cdot t_i ), (\vec p, p_i ) )$
\end{itemize}

$\procsliahh( \I, \psi, \con e, \theta )$:
\begin{itemize}
\item[\ ] Let $M =  M_\ell \cup M_u \cup M_c$ be such that:
  \begin{itemize}
  \item $\I \models M$ and $M \models_p \psi$,
  \item $M_\ell \Leftrightarrow \{ c_1 \cdot \con e \geq \ell_1, \ldots, c_n \cdot \con e \geq \ell_n \}$, $c_1 > 0, \ldots, c_n > 0$,
  \item $M_u \Leftrightarrow \{ d_1 \cdot \con e \leq u_1, \ldots, d_m \cdot \con e \leq u_m \}$, $d_1 > 0, \ldots, d_m > 0$, and
  \item $\con e \not\in \fvars( \ell_1, \ldots, \ell_n ) \cup \fvars( u_1, \ldots, u_m ) \cup \fvars( M_c )$.
  \end{itemize}
\item[\ ] Return one of
  $\begin{cases}
  (c_i, \ell_i + \rho, + ) & \parbox{1.0\textwidth}{$n>0, \mathsf{max} \{ ( \frac{\ell_1}{ c_1} )^\I, \ldots, ( \frac{\ell_n}{ c_n} )^\I \} = ( \frac{\ell_i}{ c_i } )^\I, \\
                                                 \rho = ( c_i \cdot \con e - \ell_i )^\I \ \opmod \ ( \theta \cdot c_i ) $}\\[1.2em]
  (d_j, u_j - \rho, - )    & \parbox{1.0\textwidth}{$m>0, \mathsf{min} \{ ( \frac{u_1}{ d_1 } )^\I, \ldots, ( \frac{u_m}{ d_m } )^\I \} = ( \frac{u_j}{ d_j } )^\I, \\
                                                 \rho = ( u_j - d_j \cdot \con e )^\I \ \opmod \ ( \theta \cdot d_j )$}\\[1.2em]
  (1, \rho, +) & n=0, m=0, \rho =  \con e ^\I \ \opmod \ \theta
  \end{cases}$
\end{itemize}
\end{framed}
\vspace{-2ex}
\caption{A selection function $\procslia$ for arbitrary quantifier-free $\lia$-formula $\varphi[ \vec a, x ]$.
\label{fig:sel-lia}}
\end{figure}

\begin{lem}\label{lem:lia-finite}
$\procslia$ is finite for $\varphi[ \vec{ \con k }, \vec{ \con e } ]$.
\end{lem}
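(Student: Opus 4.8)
The plan is to mirror the argument for Lemma~\ref{lem:lra-finite}, the added difficulty being that $\procsliahh$ returns terms built not only from solved forms of the atoms of its input formula but also from the model-dependent constant $\rho$, and that the substitutions with coefficients $\sigma = \{ c \cdot \con e_i \mapsto t_i \}$ applied along the recursion in $\procsliah$ change both the working formula $\psi$ and the accumulator $\theta$. I would therefore prove, by induction on the recursion depth $k = 0, \ldots, n$, that the pair $(\psi, \theta)$ passed as the first-formula/accumulator arguments of $\procsliah$ ranges over a finite set $\Psi_k$ determined by $\varphi[\vec{\con k}, \vec{\con e}]$ alone (independently of $\I$, $\Gamma$, and the nondeterministic choices). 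The base case is immediate since $k=0$ gives only $(\lnot \varphi[\vec{\con k}, \vec{\con e}], 1)$.

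For the inductive step, fix $(\psi, \theta) \in \Psi_k$ and analyze $\procsliahh( \I, \psi, \con e_i, \theta )$. Let $A$ be the (finite, $\I$-independent) set of atoms of $\psi$; each atom of $A$ containing $\con e_i$ is equivalent to a literal in solved form $c \cdot \con e_i \geq \ell$ or $d \cdot \con e_i \leq u$ with $\con e_i \notin \fvars(\ell) \cup \fvars(u)$ and $c, d > 0$, and there are only finitely many such solved forms — hence finitely many possible coefficients and side terms $\ell, u$. The constant returned satisfies $0 \leq \rho < \theta \cdot c$ (respectively $< \theta \cdot d$ or $< \theta$); since $\theta$ is bounded by the induction hypothesis and $c, d$ range over a finite set, $\rho$ ranges over a finite set. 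Consequently the triple $(c, t_i, p_i)$, with $t_i$ among $\ell + \rho$, $u - \rho$, $\rho$, ranges over a finite set, so $\sigma$ ranges over a finite set, and therefore $(\psi \sigma, \theta \cdot c)$ ranges over a finite set, which we take as $\Psi_{k+1}$.

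It remains to bound the output of $\procslia$. Along the exactly $n$ recursive calls of $\procsliah$, each component of $\vec t$ is obtained by applying a finite composition of finitely-many possible substitutions to one of the finitely-many possible terms $\theta \cdot t_i$, so $\vec t$ ranges over a finite set; the polarity vector $\vec p$ has at most $2^n$ values; and $\theta$ is bounded as above. Hence the returned tuple $\vec t\ \opdivd^{\vec p}\ \theta$ ranges over a finite set $\procslia^\ast( \varphi[\vec{\con k}, \vec{\con e}] )$, as required. I expect the main obstacle to be getting the joint induction right: one must observe that $\theta$ is always a product of at most $n$ coefficients appearing in solved forms of the evolving formulas, so that its boundedness and the finiteness of $\Psi_k$ support each other rather than forming a circular dependency, and that applying substitutions with coefficients does not secretly blow up the atom set.
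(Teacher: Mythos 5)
Your proposal is correct and follows essentially the same route as the paper's proof: bound the tuples returned by $\procsliahh$ by the finitely many solved forms of the atoms of the current formula together with the finitely many values of $\rho$ in $[0, \theta \cdot c)$, then conclude by the finite number of recursive calls in $\procsliah$. Your explicit induction on recursion depth, establishing that the pairs $(\psi, \theta)$ range over finite sets $\Psi_k$, makes precise a step the paper leaves implicit (it quantifies over ``any $\sigma$ and finite $\theta$'' without separately arguing that these themselves range over finite sets), so your write-up is if anything slightly more careful.
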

\begin{longv}
\begin{proof}
First, we show that only a finite number of tuples are returned by $\procsliahh( \I, ( \lnot \varphi[\vec{\con k}, \vec{\con e}] )\sigma, \con e_i, \theta )$
for any $\I, \sigma, \con e_i$ and finite $\theta$.
Let $A$ be the set of atoms occurring in $\varphi[ \vec{ \con k }, \con e_i ] \sigma$.
The literals in satisfying assignments of $(\lnot \varphi[\vec{\con k}, \vec{\con e}] )\sigma$ are over these atoms.
Let $\{ c_1 \cdot \con e_i < \ell_1, \ldots, c_n \cdot \con e_i < \ell_n, d_1 \cdot \con e_i > u_1, \ldots, d_m \cdot \con e_i > u_m \}$ be the set of atoms
that are in solved form with respect to $\con e_i$ that are equivalent to the atoms of $A$ containing $\con e_i$,
where $\con e_i \not\in \fvars( \ell_1, \ldots, \ell_n, u_1, \ldots, u_m )$ and $c_1 > 0, \ldots, c_n > 0, d_1 > 0, \ldots, d_m > 0$.
The tuples returned by $\procsliahh( \I, ( \lnot \varphi[\vec{\con k}, \vec{\con e}] )\sigma, \con e_i )$ are in the finite set:
\[\begin{array}{c@{\hspace{1em}}c@{\hspace{1em}}c@{\hspace{1em}}c@{\hspace{1em}}c}
 \{ ( c_i, \ell_i + \rho, + ) \mid 1 \leq i \leq n \} & \cup & \{ ( d_j, u_j+1+\rho, - ) \mid 1 \leq j \leq m \} & \cup & \\[.2ex]
 \{ ( d_j, u_j - \rho, - ) \mid 1 \leq j \leq m \} & \cup & \{ ( c_i, \ell_i-1-\rho, + ) \mid 1 \leq i \leq n \} & \cup &  \{ ( 1, \rho, + ) \} \\[.2ex]
\end{array}\]
and where $0 \leq \rho < (\theta \cdot c)$.
Since $c$ and $\theta$ are finite, there are a finite number of tuples of this form.
Since there are only a finite number of recursive calls to $\procsliah$ within $\procslia$,
and each call modifies $\vec t$ based a finite number of possible tuples coming from the set above,
the set of possible return values of $\procslia$ is finite, and thus it is finite for $\varphi[ \vec{ \con k }, \vec{\con e} ]$.
\qed
\end{proof}
\end{longv}

\begin{longv}
\begin{lem}\label{lem:procsliah-subs}
If $\I$ is a model for $\lia$ and a quantifier-free formula $\psi$, $\theta \geq 1$,
and $\procsliahh( \I, \psi, \con e, \theta ) = ( c, t, p )$, then:
\begin{enumerate}
\item \label{it:procsliah-subs-congruent}
$( c \cdot \con e )^\I \equiv_{\theta \cdot c} t^\I$, and
\item \label{it:procsliah-subs-model-preserve}
$\I \models \psi \{ c \cdot \con e \mapsto t \}$.
\end{enumerate}
\end{lem}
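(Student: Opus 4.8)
The plan is to prove the two claims separately, each by a direct case split on which of the three branches of $\procsliahh$ produced the returned triple $(c,t,p)$. Throughout I would fix a set $M = M_\ell \cup M_u \cup M_c$ of the form guaranteed by the definition of $\procsliahh$: $\I \models M$, $M \models_p \psi$, the literals of $M_\ell$ are equivalent to $\{c_1\cdot\con e \geq \ell_1,\ldots,c_n\cdot\con e \geq \ell_n\}$ with every $c_j > 0$, those of $M_u$ to $\{d_1\cdot\con e \leq u_1,\ldots,d_m\cdot\con e \leq u_m\}$ with every $d_k > 0$, and $\con e \notin \fvars(M_c)$. I would work directly with these solved-form representatives, which also disposes of the minor point that $\sigma = \{c\cdot\con e \mapsto t\}$ must commute with the stated equivalences.

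Part~\ref{it:procsliah-subs-congruent} (the congruence) is bookkeeping about the correction term $\rho$. In the first branch $c = c_i$ and $t = \ell_i + \rho$ with $\rho = (c_i\cdot\con e - \ell_i)^\I \opmod (\theta\cdot c_i)$; since $\rho \equiv_{\theta c_i} (c_i\cdot\con e-\ell_i)^\I$ we get $t^\I = \ell_i^\I + \rho \equiv_{\theta c_i} \ell_i^\I + (c_i\cdot\con e^\I - \ell_i^\I) = (c_i\cdot\con e)^\I$, which is the claim as $\theta\cdot c = \theta\cdot c_i$. The second branch ($t = u_j - \rho$) is symmetric, and the third is the degenerate case $c = 1$, $t = \rho = \con e^\I \opmod \theta$, giving $t^\I \equiv_\theta \con e^\I$ directly. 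This part uses only $\theta \geq 1$ and $c > 0$ so that the moduli are positive.

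Part~\ref{it:procsliah-subs-model-preserve} I would first reduce, using $M \models_p \psi$, to showing $\I \models M\{c\cdot\con e\mapsto t\}$: a substitution instance of a propositional tautology is a tautology, so from $M \models_p \psi$ we obtain $M\sigma \models_p \psi\sigma$, and hence $\I \models M\sigma$ yields $\I \models \psi\sigma$. It then remains to check each literal of $M\sigma$. The literals of $M_c$ are untouched by $\sigma$ and already satisfied. For the first branch, $\sigma = \{c_i\cdot\con e \mapsto \ell_i+\rho\}$; applying it to a solved lower bound $c_j\cdot\con e \geq \ell_j$ (multiply through by $c_i$ to expose $c_i\cdot\con e$, then replace it by $t$) gives $c_j(\ell_i+\rho) \geq c_i\ell_j$, i.e. $\tfrac{\ell_i^\I+\rho}{c_i} \geq \tfrac{\ell_j^\I}{c_j}$, which holds since $\rho \geq 0$ and $i$ maximizes $(\tfrac{\ell_\bullet}{c_\bullet})^\I$. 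Applying $\sigma$ to an upper bound $d_k\cdot\con e \leq u_k$ gives $d_k(\ell_i+\rho) \leq c_i u_k$, i.e. $\tfrac{\ell_i^\I+\rho}{c_i} \leq \tfrac{u_k^\I}{d_k}$; here I use that $\I \models M_\ell$ gives $c_i\cdot\con e^\I - \ell_i^\I \geq 0$, hence $\rho = (c_i\cdot\con e^\I - \ell_i^\I)\opmod(\theta c_i) \leq c_i\cdot\con e^\I - \ell_i^\I$, so $\tfrac{\ell_i^\I+\rho}{c_i} \leq \con e^\I \leq \tfrac{u_k^\I}{d_k}$, the last inequality from $\I \models M_u$. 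The second branch is symmetric, and in the third branch $n = m = 0$ forces $M$ to contain no literal mentioning $\con e$, so $M\sigma = M$ and $\I \models M$.

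I expect the main obstacle to be exactly this last chain of inequalities in Part~\ref{it:procsliah-subs-model-preserve}, the $\lia$ analogue of Lemma~\ref{lem:procslrahh-model-preserve}: the correction $\rho$ must be large enough to restore the congruence of Part~\ref{it:procsliah-subs-congruent}, yet small enough ($0 \leq \rho \leq c_i\cdot\con e^\I - \ell_i^\I$) that adding it to the maximal lower bound does not overshoot the minimal upper bound, and the coefficients $c_j, d_k$ together with the ``substitution with coefficients'' convention make this bookkeeping heavier than in the real case. The reduction via preservation of propositional entailment under substitution, and the case $n=m=0$, are comparatively routine.
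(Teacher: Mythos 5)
Your proposal is correct and follows essentially the same route as the paper's proof: the same three-way case split, the same modular-arithmetic computation for the congruence, the maximality of $(\ell_i/c_i)^\I$ plus $\rho \geq 0$ for the lower bounds, and the observation that $\rho = (c_i\cdot\con e - \ell_i)^\I \opmod (\theta\cdot c_i) \leq (c_i\cdot\con e - \ell_i)^\I$ for the upper bounds (the paper writes this via an auxiliary $\rho'$). The only cosmetic difference is that you make the step from $\I \models M\sigma$ to $\I \models \psi\sigma$ explicit via preservation of propositional entailment under substitution, which the paper leaves implicit.
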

\begin{proof}
We first show part~\ref{it:procsliah-subs-congruent}.
In the case that $n>0$ and $\procsliahh( \I, \psi, \con e, \theta ) = ( c_i, \ell_i + \rho, + )$, and
\[ ( \ell_i + \rho )^\I \equiv_{\theta \cdot c_i}
   ( \ell_i + ( c_i \cdot \con e - \ell_i )^\I \ \opmod \ ( \theta \cdot c_i ) )^\I \equiv_{\theta \cdot c_i}
   ( c_i \cdot \con e )^\I.\]
In the case that $m>0$ and $\procsliahh( \I, \psi, \con e, \theta ) = ( d_j, u_j - \rho, - )$,  we have
\[
( u_j - \rho )^\I \equiv_{\theta \cdot d_j}
( u_j - ( u_j - d_j \cdot \con e )^\I \ \opmod \ ( \theta \cdot d_j ) )^\I \equiv_{\theta \cdot d_j}
( d_j \cdot \con e )^\I.\]
In the case that $n=0$, $m=0$, and $\procsliahh( \I, \psi, \con e, \theta ) = ( 1, \rho, + )$, we have that
$\rho^\I \equiv_{\theta \cdot 1}$
$( \con e^\I \ \opmod \ \theta )^\I \equiv_{\theta \cdot 1}$
$( 1 \cdot \con e )^\I$.

To show part~\ref{it:procsliah-subs-model-preserve},
we first focus on the case where $n>0$ and $\procsliahh( \I, \psi, \con e, \theta ) = ( c_i, \ell_i + \rho, + )$.
We have that $\rho = ( c_i \cdot \con e - \ell_i )^\I \ \opmod \ ( \theta \cdot c_i )$.
Let $M$ be a set of literals of the form described in the body of $\procsliahh( \I, \psi, \con e )$.
We show that $\I$ satisfies each literal in $M \sigma$, where $\sigma = \{ c_i \cdot \con e \mapsto \ell_i + \rho \}$.
First, consider an atom in $M_\ell \sigma$ that is equivalent to
$( c_j \cdot \con e \geq \ell_j ) \sigma$ for some $j \in \{ 1, \ldots, n \}$.
This is equivalent to $( c_j \cdot c_i \cdot \con e \geq c_i \cdot \ell_j ) \sigma$, which is equivalent to
$\frac{c_j \cdot c_i}{c_i} \cdot ( \ell_i + \rho ) \geq \frac{c_j \cdot c_i}{c_j} \cdot \ell_j$,
which is satisfied by $\I$ since $( \frac{\ell_i}{ c_i } )^\I \geq ( \frac{\ell_j}{ c_j } )^\I$ by our selection of $( c_i, \ell_i + \rho)$ and since $\rho \geq 0$.
Second, consider the atom in $M_u \sigma$ that is equivalent to
$( d_j \cdot \con e \leq u_j ) \sigma$ for some $j \in \{ 1, \ldots, m \}$.
Let $\rho' = ( c_i \cdot \con e - \ell_i )^\I$, which is greater than $0$ since $\I$ satisfies $(c_i \cdot \con e \geq \ell_i)$.
Since $( c_i \cdot \con e )^\I = ( \ell_i + \rho' )^\I$,
we have that $\I$ satisfies $( d_j \cdot \con e \leq u_j ) \{ c_i \cdot \con e \mapsto \ell_i + \rho' \}$,
which is equivalent to $( d_j \cdot ( \ell_i + \rho' ) \leq c_i \cdot u_j )$.
Since $\rho = \rho' \ \opmod \ ( \theta \cdot c ) \leq \rho'$,
we have that $\I$ also satisfies 
$( d_j \cdot ( \ell_i + \rho ) \leq c_i \cdot u_j )$,
which is $( d_j \cdot \con e \leq u_j ) \sigma$.
Finally,  $\I$ satisfies $M_c \sigma$ as $M_c \sigma = M_c$ and $\I \models M_c$.
Thus, $\I$ satisfies $M \sigma$, which entails $\psi \sigma$.
The case for when $m>0$ and $\procsliahh( \I, \psi, \con e, \theta ) = ( d_j, u_j - \rho, - )$ is symmetric.
When $n=0$, $m=0$, and $\procsliahh( \I, \psi, \con e ) = ( 1, \rho, + )$, the assignment $M$ does not contain $\con e$, and thus $\I$ satisfies $M \{ c \cdot \con e \mapsto \rho \} = M$ and $\psi \{ c \cdot \con e \mapsto \rho \}$.
\qed
\end{proof}
\begin{lem}\label{lem:procsliah}
Each recursive call to $\procsliah( \I, \psi, ( \con e_i, \ldots, \con e_n ), \theta, ( t_1, \ldots, t_{i-1} ), \vec p )$ 
within $\procslia( \I, \Gamma, ( \con e_1, \ldots, \con e_n ) )$
is such that:
\begin{enumerate}
\item \label{it:procsliah-inv-congruent}
$\theta \opdivides t_j^\I$ for each $1 \leq j < i$, and
\item \label{it:procsliah-inv-model-preserve}
$\I \models \psi$ and $\psi$ is equivalent to $\lnot \varphi[\vec{ \con k }, \vec{\con e}] \{ \theta \cdot \con e_1 \mapsto t_1 \} \cdot \ldots \cdot \{ \theta \cdot \con e_{i-1} \mapsto t_{i-1} \}$.
\end{enumerate}
\end{lem}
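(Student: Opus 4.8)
The plan is to prove both invariants simultaneously by induction on $i$, i.e.\ on the depth of the call chain generated by $\procsliah$, with Lemma~\ref{lem:procsliah-subs} supplying the inductive step. The call chain is linear: $\procslia$ issues the initial call $\procsliah(\I,\lnot\varphi[\vec{\con k},\vec{\con e}],\vec{\con e},1,(),())$, every call with $i\le n$ makes exactly one recursive call, and the call with $i>n$ returns. For the \emph{base case} ($i=1$) we have $\theta=1$ and an empty term tuple, so Invariant~\ref{it:procsliah-inv-congruent} is vacuous, while for Invariant~\ref{it:procsliah-inv-model-preserve} the formula $\psi=\lnot\varphi[\vec{\con k},\vec{\con e}]$ is trivially equivalent to itself (the right-hand side has no substitutions) and $\I\models\psi$ holds because, by the definition of a selection function, $\procslia$ is only invoked with $\I\models\Gamma\cup\lnot\varphi[\vec{\con k},\vec{\con e}]$.

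For the \emph{inductive step}, assume the invariants at the call with state $(\I,\psi,(\con e_i,\ldots,\con e_n),\theta,(t_1,\ldots,t_{i-1}),\vec p)$ and that $i\le n$; let $(c,t_i,p_i)=\procsliahh(\I,\psi,\con e_i,\theta)$ and $\sigma=\{c\cdot\con e_i\mapsto t_i\}$, so the next call has state $(\I,\psi\sigma,(\con e_{i+1},\ldots,\con e_n),\theta c,(t_1',\ldots,t_i'),(\vec p,p_i))$ with $t_j'=(c\cdot t_j)\sigma$ for $j<i$ and $t_i'=\theta\cdot t_i$. Since $\procsliahh$ always returns a positive coefficient, $c\ge 1$ and $\theta\ge 1$, so Lemma~\ref{lem:procsliah-subs} applies. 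Part~\ref{it:procsliah-subs-congruent} gives $(c\cdot\con e_i)^\I\equiv_{\theta c}t_i^\I$; in particular $c\mid t_i^\I$, hence $\theta c\mid\theta\,t_i^\I=(t_i')^\I$. For $j<i$, writing the linear term $t_j$ as $d\cdot\con e_i+s$ with $\con e_i\notin\fvars(s)$ we have $(t_j')^\I=d\,t_i^\I+c\,s^\I$, so $(t_j')^\I-c\,t_j^\I=d(t_i^\I-c\,\con e_i^\I)\equiv_{\theta c}0$; combined with $\theta\mid t_j^\I$ from the hypothesis this yields $\theta c\mid(t_j')^\I$, establishing Invariant~\ref{it:procsliah-inv-congruent} for the new state.

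For Invariant~\ref{it:procsliah-inv-model-preserve}, $\I\models\psi\sigma$ is precisely Lemma~\ref{lem:procsliah-subs}(\ref{it:procsliah-subs-model-preserve}) (using $\I\models\psi$ from the hypothesis). For the equivalence, the hypothesis gives that $\psi$ is equivalent to $\lnot\varphi[\vec{\con k},\vec{\con e}]\{\theta\cdot\con e_1\mapsto t_1\}\cdots\{\theta\cdot\con e_{i-1}\mapsto t_{i-1}\}$, so it remains to check, atom by atom, that applying $\sigma$ to this formula (and rescaling the earlier substituted terms by $c$) produces the same formula, up to equivalence, as applying the combined coefficient-$\theta c$ substitutions $\{\theta c\cdot\con e_1\mapsto t_1'\}\cdots\{\theta c\cdot\con e_i\mapsto t_i'\}$ directly to $\lnot\varphi[\vec{\con k},\vec{\con e}]$. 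This follows by unwinding the definition of substitution with coefficients: absorbing the $\con e_1,\ldots,\con e_{i-1}$ substitutions scales each atom by $\theta$, the subsequent application of $\sigma$ scales it further by $c$, and $\theta c$ is exactly the normalization factor the combined substitutions would use, with $\con e_i$ acquiring coefficient $c$ in the rescaled atom and being matched by $t_i'=\theta t_i$.

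The \textbf{main obstacle} is this last step. The divisibility claims and the $\I\models$ part fall out of Lemma~\ref{lem:procsliah-subs} almost immediately, but verifying the equivalence in Invariant~\ref{it:procsliah-inv-model-preserve} requires careful bookkeeping of how the substitution-with-coefficients operation composes across recursion levels as the modulus $\theta$ grows and the previously substituted terms are rescaled by each new coefficient $c$. The difficulty is entirely in making the syntactic normalization precise---in particular, in seeing that the term $t_j$ stored at level $i$ (the result of $i-1$ coefficient substitutions) is, up to equivalence, the correct term to substitute for $\theta c\cdot\con e_j$ in $\lnot\varphi[\vec{\con k},\vec{\con e}]$---rather than in any new idea.
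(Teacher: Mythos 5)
Your proposal is correct and follows essentially the same route as the paper's proof: induction along the linear call chain with a trivial base case, and an inductive step that derives both invariants from Lemma~\ref{lem:procsliah-subs} (the congruence $(c\cdot\con e_i)^\I\equiv_{\theta c}t_i^\I$ yielding the divisibility of the rescaled terms, and part~\ref{it:procsliah-subs-model-preserve} yielding $\I\models\psi\sigma$), with the composition of coefficient substitutions handled by unwinding the definition. Your treatment of the divisibility of $((c\cdot t_j)\sigma)^\I$ and your explicit flagging of the substitution-composition bookkeeping are slightly more detailed than the paper's, which asserts both steps directly, but the argument is the same.
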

\begin{proof}
Both statements clearly hold for the initial call to $\procsliah$ in the body of $\procslia$.
Now, assume both statements hold for some call to $\procsliah( \I, \psi, ( \con e_i, \vec{\con e}' ), \theta, ( t_1, \ldots, t_{i-1} ), \vec p )$,
and assume $( c, t_i, p_i) = \procsliahh( \I, \psi, \con e_i, \theta )$.
We show that both statements hold for the call to 
$\procsliah( \I, \psi \sigma, \vec{\con e}', \theta \cdot c, ( ( c \cdot t_1 )\sigma, \ldots, ( c \cdot t_{i-1} )\sigma, \theta \cdot t_i ), (\vec p, p_i) )$,
where $\sigma = \{ c \cdot \con e_i \mapsto t_i \}$.

To show part~\ref{it:procsliah-inv-congruent},
we have from Lemma~\ref{lem:procsliah-subs} part \ref{it:procsliah-subs-congruent} that:
\begin{equation} \label{eqn:procsliah-subs-congruent}
( c \cdot \con e_i )^\I \equiv_{\theta \cdot c} ( t_i + \rho )^\I
\end{equation}
Consider a $t_j$ where $1 \leq j < i$, where 
by our assumption is such that $\theta \opdivides t_j^\I$,
and thus $\theta \cdot c \opdivides ( c \cdot t_j )^\I$.
By (\ref{eqn:procsliah-subs-congruent}), we have that
$\theta \cdot c \opdivides ( ( c \cdot t_j )\sigma )^\I$.
Also by (\ref{eqn:procsliah-subs-congruent}), we have that $c \opdivides ( t_i + \rho )^\I$,
and thus $\theta \cdot c \opdivides ( \theta \cdot ( t_i + \rho ) )^\I$.

To show part~\ref{it:procsliah-inv-model-preserve},
by our assumption,
$\I \models \psi$ and thus by Lemma~\ref{lem:procsliah-subs} part \ref{it:procsliah-subs-model-preserve} we have that $\I \models \psi \sigma$.
By our assumption, $\psi$ is equivalent to 
$\lnot \varphi[\vec{ \con k }, \vec{\con e}] \{ \theta \cdot \con e_1 \mapsto t_1 \} \cdot \ldots \cdot \{ \theta \cdot \con e_{i-1} \mapsto t_{i-1} \}$.
Thus,
$\psi \sigma$ is equivalent to
$\lnot \varphi[\vec{ \con k }, \vec{\con e}] \{ (\theta \cdot c) \cdot \con e_1 \mapsto ( c \cdot t_1 ) \sigma \} \cdot \ldots \cdot \{ (\theta \cdot c) \cdot \con e_{i-1} \mapsto ( c \cdot t_{i-1} ) \sigma \} \cdot \{ (\theta \cdot c) \cdot \con e_i \mapsto \theta \cdot ( t_i + \rho ) \}$.
Thus, the lemma holds.
\qed
\end{proof}
\end{longv}

\begin{lem}\label{lem:lia-monotonic}
$\procslia$ is model-preserving for $\varphi[ \vec{ \con k }, \vec{ \con e} ]$.
\end{lem}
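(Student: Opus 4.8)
The plan is to mirror the (one-line) proof of Lemma~\ref{lem:lra-monotonic} in the real case: unwind the recursion of $\procsliah$ through the invariant of Lemma~\ref{lem:procsliah}, then convert the ``substitution with coefficients'' semantics back into ordinary substitution using the divisibility facts. Recall that model-preserving for $\varphi[\vec{\con k},\vec{\con e}]$ means: whenever $\procslia(\I,\Gamma,\lnot\varphi[\vec{\con k},\vec{\con e}]) = \vec t$ we must have $\I \models \lnot\varphi[\vec{\con k},\vec t]$. By construction the returned tuple is $\vec t = \vec T\ \opdivd^{\vec p}\ \Theta$, where $\Theta$, $\vec T = (T_1,\dots,T_n)$ and $\vec p$ are the accumulator values at the terminating call $\procsliah(\I,\psi_{\mathrm{fin}},(),\Theta,\vec T,\vec p)$ reached from $\procslia(\I,\Gamma,\vec{\con e})$.

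First I would apply Lemma~\ref{lem:procsliah} to that terminating call: part~\ref{it:procsliah-inv-congruent} gives $\Theta \opdivides T_j^\I$ for every $j$, and part~\ref{it:procsliah-inv-model-preserve} gives $\I \models \psi_{\mathrm{fin}}$, where $\psi_{\mathrm{fin}}$ is equivalent to $\lnot\varphi[\vec{\con k},\vec{\con e}]\{\Theta\cdot\con e_1 \mapsto T_1\}\cdots\{\Theta\cdot\con e_n \mapsto T_n\}$ (itself obtained by applying Lemma~\ref{lem:procsliah-subs} step by step). It then remains to check that this formula agrees with $\lnot\varphi[\vec{\con k},\vec t]$ on $\I$. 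Since $\Theta > 0$ and $\Theta \opdivides T_j^\I$, each integer division in $\vec t$ is exact, so $(T_j\ \opdivd^{p_j}\ \Theta)^\I = T_j^\I / \Theta$ regardless of the polarity $p_j$. For any $\lia$-atom $d\cdot\con e_j + s \bowtie s'$ with $\con e_j \notin \fvars(s)\cup\fvars(s')$, applying $\{\Theta\cdot\con e_j \mapsto T_j\}$ produces, after clearing the positive coefficient $\Theta$, the atom $d\cdot T_j + \Theta\cdot s \bowtie \Theta\cdot s'$, which holds in $\I$ iff $d\cdot(T_j^\I/\Theta) + s^\I \bowtie s'^\I$, i.e.\ iff the ordinary substitution $\{\con e_j \mapsto T_j\ \opdivd^{p_j}\ \Theta\}$ makes it true in $\I$. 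Propagating this atom-wise equivalence through the Boolean structure of $\lnot\varphi$ and over all $j$ yields $\I \models \lnot\varphi[\vec{\con k},\vec T\ \opdivd^{\vec p}\ \Theta] = \lnot\varphi[\vec{\con k},\vec t]$, as required.

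The main obstacle I anticipate is the bookkeeping in this final conversion step. The accumulator $\vec T$ built by $\procsliah$ is not literally the tuple of raw outputs of $\procsliahh$: at each recursive step the earlier components are rescaled via $(c\cdot\vec t)\sigma$ and $\Theta$ is the running product of the coefficients $c$ (starting from $1$). One must therefore verify carefully that, modulo these accumulated rescalings, the composed substitution-with-coefficients furnished by Lemma~\ref{lem:procsliah}\,\ref{it:procsliah-inv-model-preserve} really does coincide in $\I$ with componentwise plain substitution $\con e_j \mapsto T_j\ \opdivd^{p_j}\ \Theta$; this needs the precise definitions of how $\sigma = \{c\cdot\con e \mapsto t\}$ acts on terms and atoms (including the clause $(c\cdot(d\cdot\con e + s))\sigma = d\cdot t + c\cdot s$) and of how successive such substitutions compose. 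Once Lemma~\ref{lem:procsliah} is taken as given, however, this is a routine if slightly tedious verification, and no new idea beyond the divisibility/exact-division observation is needed.
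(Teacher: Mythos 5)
Your proposal is correct and follows essentially the same route as the paper's proof: invoke Lemma~\ref{lem:procsliah} at the terminating call to get the divisibility facts $\theta \opdivides s_i^\I$ together with $\I \models (\lnot\varphi[\vec{\con k},\vec{\con e}])\{\theta\cdot\con e_1 \mapsto s_1\}\cdots\{\theta\cdot\con e_n \mapsto s_n\}$, and then use exactness of the integer divisions to identify the composed substitution-with-coefficients with the plain substitution $\{\vec{\con e} \mapsto \vec t\}$ under $\I$. The atom-wise conversion you flag as the main bookkeeping obstacle is exactly the step the paper compresses into a single ``Thus,'' so your more explicit treatment is, if anything, a useful elaboration rather than a divergence.
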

\begin{longv}
\begin{proof}
Assume that $\procslia( \I, \Gamma, \lnot \varphi[\vec{ \con k }, \vec{\con e}] ) = \vec{t}$,
where $\vec{\con e} = ( \con e_1, \ldots, \con e_n )$, and $\vec{t} = ( t_1, \ldots, t_n )$.
By Lemma~\ref{lem:procsliah} and the definition of $\procslia$, there is a $\theta$ such that for each $i = 1, \ldots, n$,
term $t_i$ is of the form $s_i\ \opdivd^p \ \theta$ where $\theta \opdivides s_i^\I$,
and $\I \models ( \lnot \varphi[\vec{ \con k }, \vec{\con e}] ) \{ \theta \cdot \con e_1 \mapsto s_1 \} \cdot \ldots \cdot \{ \theta \cdot \con e_n \mapsto s_n \}$.
Thus, $\I$ satisfies $( \lnot \varphi[\vec{ \con k }, \vec{\con e}] ) \{ \vec{ \con e } \mapsto \vec{t} \} =$ $\lnot \varphi[\vec{ \con k }, \vec{t}]$,
and thus $\procslia$ is model-preserving for $\varphi[ \vec{ \con k }, \vec{ \con e} ]$.
\qed
\end{proof}
\end{longv}

\begin{thm}
$\mathcal{P}_{\procslia}$ is a sound and complete procedure 
for determining the $\lia$-satisfiability of $\exists \vec a\, \forall \vec x\, \varphi[\vec a, \vec x]$.
\end{thm}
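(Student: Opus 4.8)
The plan is to obtain the theorem as a direct instance of the general framework of Section~\ref{sec:qi}, in exact parallel with the $\lra$ case. Soundness of ``unsat'' and ``sat'' answers holds for \emph{any} selection function by Lemmas~\ref{lem:proc-qi-sound-unsat} and~\ref{lem:proc-qi-sound-sat}, so nothing $\lia$-specific is needed there. For termination and completeness, Theorem~\ref{lem:sel-fin-mono} reduces the claim to showing that $\procslia$ is \emph{finite} and \emph{monotonic} for every $\varphi[\vec{\con k},\vec{\con e}]$ in the fragment, and by Lemma~\ref{lem:sel-mm-m} it suffices to prove finiteness together with the stronger \emph{model-preserving} property. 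These two facts are Lemma~\ref{lem:lia-finite} and Lemma~\ref{lem:lia-monotonic}, so the theorem follows by assembling Theorem~\ref{lem:sel-fin-mono}, Lemma~\ref{lem:sel-mm-m}, Lemma~\ref{lem:lia-finite}, Lemma~\ref{lem:lia-monotonic}, and the soundness lemmas; the real work is in the two $\lia$-specific lemmas, which I sketch next.

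For finiteness I would argue as in Lemma~\ref{lem:lra-finite}: fix $\I$ and a coefficient-substitution $\sigma$; a propositionally satisfying assignment of $(\lnot\varphi[\vec{\con k},\vec{\con e}])\sigma$ ranges over the finitely many atoms of that formula, which are equivalent to finitely many solved-form bounds $c_i\cdot\con e \geq \ell_i$ and $d_j\cdot\con e \leq u_j$. Each call to $\procsliahh$ (Figure~\ref{fig:sel-lia}) returns a triple whose term is one of $\ell_i+\rho$, $u_j-\rho$, or $\rho$, with $0 \le \rho < \theta\cdot c$; since $\theta$ is always a finite product of coefficients occurring in the formula, hence bounded, there are finitely many such triples. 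As $\procsliah$ makes only finitely many recursive calls, each appending a term from a finite set, the image of $\procslia$ is finite.

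The crux, and the step I expect to be the main obstacle, is model-preservation: if $\procslia(\I,\Gamma,\lnot\varphi[\vec{\con k},\vec{\con e}]) = \vec t$ then $\I \models \lnot\varphi[\vec{\con k},\vec t]$, where the $\theta$/$\rho$ bookkeeping carries the whole argument. I would first prove Lemma~\ref{lem:procsliah-subs}: for a single variable, if $\procsliahh(\I,\psi,\con e,\theta) = (c,t,p)$ then $(c\cdot\con e)^\I \equiv_{\theta\cdot c} t^\I$ and $\I \models \psi\{c\cdot\con e \mapsto t\}$. This is a case split on the three branches of $\procsliahh$; the congruence part is immediate from the definition of $\rho$, and for the second part one checks that every literal of a satisfying assignment $M = M_\ell \cup M_u \cup M_c$ survives the substitution. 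The delicate case is that, after replacing $c_i\cdot\con e$ by the maximal lower bound $\ell_i+\rho$, every upper bound in $M_u$ still holds: this needs $\rho = (c_i\cdot\con e-\ell_i)^\I \ \opmod \ (\theta\cdot c_i) \le (c_i\cdot\con e-\ell_i)^\I$ so the substituted value does not overshoot an upper bound, combined with maximality of $(\ell_i/c_i)^\I$ among the lower bounds; the symmetric branch for minimal upper bounds is dual. Lemma~\ref{lem:procsliah} then lifts this to the recursion by induction on depth: each accumulated $t_j$ satisfies $\theta \mid t_j^\I$ and $\psi$ is $\lnot\varphi$ with the iterated coefficient-substitutions applied, using Lemma~\ref{lem:procsliah-subs} and tracking how multiplying $\theta$ by $c$ preserves divisibility. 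Finally, Lemma~\ref{lem:lia-monotonic} follows: $\vec t$ consists of terms of the form $s_i\ \opdivd^p \ \theta$ with $\theta \mid s_i^\I$, so the integer divisions are exact in $\I$; hence $\I$ models $\lnot\varphi$ under the plain substitution $\vec{\con e}\mapsto\vec t$, and moreover $\I$ extends to a model of the normalized instance $\normalize{\varphi[\vec{\con k},\vec t]}$ by reading off $\con d_i^\I = s_i^\I/\theta$ and $\con m_i^\I = 0$ — which gives the model-preserving property, hence monotonicity via Lemma~\ref{lem:sel-mm-m}, hence termination via Theorem~\ref{lem:sel-fin-mono}. The essential difference from $\lra$ is that $\Int$ is not dense: in place of the infinitesimal $\delta$, the procedure must thread enough congruence information through $\rho$ and $\theta$ to keep the substituted witness an integer that still satisfies all bounds, and the integer-division normalization is what discharges this at the end.
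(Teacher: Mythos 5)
Your proposal is correct and follows essentially the same route as the paper: the theorem is obtained by combining Theorem~\ref{lem:sel-fin-mono}, Lemma~\ref{lem:sel-mm-m}, Lemma~\ref{lem:lia-finite}, and Lemma~\ref{lem:lia-monotonic}, and your sketches of the two $\lia$-specific lemmas (finiteness via the bounded set of tuples returned by $\procsliahh$, and model-preservation via the single-variable congruence/substitution lemma lifted through the recursion with the $\theta$-divisibility invariant) match the paper's own arguments in structure and in the key steps.
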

\begin{proof}
By Theorem~\ref{lem:sel-fin-mono}, Lemma~\ref{lem:sel-mm-m}, Lemma~\ref{lem:lia-finite} and Lemma~\ref{lem:lia-monotonic}.
\qed
\end{proof}

\ 



\begin{example}
To demonstrate a case involving a substitution with coefficients,
consider the formula $\forall xy\, ( 2 \cdot x < a \lor x + 3 \cdot y < b )$
whose negation is $2 \cdot e_1 \geq a \land e_1 + 3 \cdot e_2 \geq b$.
A possible run of $\mathcal{P}_{\procslia}$ on this input is as follows.

\noindent
\resizebox{\columnwidth}{!}{$
\begin{array}{c@{\hspace{1em}}c@{\hspace{1em}}c@{\hspace{1em}}|c@{\hspace{1em}}c@{\hspace{1em}}c@{\hspace{1em}}c@{\hspace{1em}}|c@{\hspace{1em}}l}
\hline
 &  &  & \multicolumn{4}{c|}{\procsliahh( \I, \Gamma, \con e, \theta )} & &  \\
\text{\#} & \Gamma & \Gamma' & \con{e} & \theta & M_\ell  & \text{return} & \vec t[\vec{\con k}] & \text{Add to } \Gamma  \\
\hline
 1 & \text{sat} & \text{sat} & 
 e_1 & 1 & \{ \underline{2 \cdot e_1 \geq a}, \ldots \} & (2, a, +) & 
 \\
  & & & 
  e_2 & 2 &\{ \underline{6 \cdot e_2 \geq 2 \cdot b - a} \} & (6, 2 \cdot b - a, +) & ( 6 \cdot a, 4 \cdot b - 2 \cdot a )\ \opdivd^{+} \ 12& \psi_1
 \\
 2 & \text{unsat} & & & & & &
 \\
\hline
\end{array}$
}

\ 

\noindent
Thus, $\exists ab\, \forall xy\, ( 2 \cdot x < a \vee x + 3 \cdot y < b )$ is $\lia$-unsatisfiable.
We assume $\rho = 0$ for all calls to $\procsliahh$ in this run.
Applying the substitution $\{ 2 \cdot e_1 \mapsto a \}$
to $e_1 + 3 \cdot e_2 \geq b$ results in the bound $6 \cdot e_2 \geq 2 \cdot b - a$ for $e_2$.
We add to $\Gamma$ the instance $\psi_1$,
which is equivalent to $2 \cdot ((6 \cdot a)\ \opdivd^{+} \ 12 ) < a \lor (6 \cdot a)\ \opdivd^{+} \ 12 + 3 \cdot ((4 \cdot b - 2 \cdot a)\ \opdivd^{+} \ 12) < b$.
Applying normalization $\normalize{}$ to this formula results in a one that is $\lia$-unsatisfiable.
\qed
\end{example}

\begin{example}
To demonstrate a case involving a non-zero value of $\rho$,
consider the formula $\forall xy\, ( 3 \cdot x + y \not\teq a \vee 0 > y \vee y > 2 )$
whose negation is $3 \cdot e_1 + e_2 \teq a \wedge 0 \leq e_2 \wedge e_2 \leq 2$,
where $\teq$ denotes the conjunction of non-strict upper and lower bounds.
A possible run of $\mathcal{P}_{\procslia}$ on this input is as follows.

\noindent
\resizebox{\columnwidth}{!}{$
\begin{array}{c@{\hspace{1em}}c@{\hspace{1em}}c@{\hspace{1em}}|c@{\hspace{1em}}c@{\hspace{1em}}c@{\hspace{1em}}c@{\hspace{1em}}|c@{\hspace{1em}}l}
\hline
 &  &  & \multicolumn{4}{c|}{\procsliahh( \I, \Gamma, \con e, \theta )} & &  \\
\text{\#} & \Gamma & \Gamma' & \con{e} & \theta & M_\ell  & \text{return} & \vec t[\vec{\con k}] & \text{Add to } \Gamma  \\
\hline
 1 & \text{sat} & \text{sat} & 
 e_1 & 1 & \{ \underline{3 \cdot e_1 \geq a-e_2} \} & (3, a-e_2, +) & 
 \\
  & & & 
 e_2 & 3 & \{ \underline{e_2 \geq 0} \} & (1, 0, +) & ( a, 0 )\ \opdivu \ 3 & \psi_1
 \\
  2 & \text{sat} & \text{sat} & 
 e_1 & 1 & \{ \underline{3 \cdot e_1 \geq a-e_2} \} & (3, a-e_2, +) & 
 \\
  & & & 
 e_2 & 3 & \{ \underline{e_2 \geq 0} \} & (1, 1, +) & ( a-1, 1 )\ \opdivu \ 3 & \psi_2
 \\
  3 & \text{sat} & \text{sat} & 
 e_1 & 1 & \{ \underline{3 \cdot e_1 \geq a-e_2} \} & (3, a-e_2, +) & 
 \\
  & & & 
 e_2 & 3 & \{ \underline{e_2 \geq 0} \} & (1,2, +) & ( a-2, 2 )\ \opdivu \ 3 & \psi_3
 \\
 4 & \text{unsat} & & & & & &
 \\
\hline
\end{array}$
}

\ 

\noindent
This run shows $\exists a\, \forall xy\, ( 2 \cdot x < a \vee x + 3 \cdot y < b )$ is $\lia$-unsatisfiable.
On the first iteration, we assume that $\Gamma'$ is satisfied by a model, call it $\I_1$, that interprets all variables as $0$,
and hence the values chosen for $e_1$ and $e_2$ correspond to their maximal lower bounds in $\I_1$, $a-e_2$ and $0$ respectively,
where in each call to $\procsliahh$ we have $\rho = 0$.
The instance $\psi_1$ added to $\Gamma$ on this iteration is equivalent to
$3 \cdot (a\ \opdivu \ 3) \not\teq a$ and
implies that $a^{\I} \not\equiv_3 0$ in subsequent models $\I$.
Thus, models $\I$ satisfying $3 \cdot e_1 + e_2 \teq a$
are such that $e_2^{\I} \not\equiv_3 0$.
On the next iteration, $\Gamma'$ is satisfied by a model, call it $\I_2$,
where the maximal lower bound for $e_2$ is $0$. 
By the above reasoning and since $\I_2$ satisfies $3 \cdot e_1 + e_2 \teq a$, 
it must be that $\rho = ((e_2-0)^{\I_2} \ \opmod \ 3) \neq 0$.
Assume $(e_2-0)^{\I_2} \equiv_3 1$.
The instance $\psi_2$ is equivalent to
$3 \cdot ((a-1)\ \opdivu \ 3)  + 1 \not\teq a$,
which implies that $a^{\I} \not\equiv_3 1$ in subsequent models $\I$,
and hence $e_2^{\I} \not\equiv_3 1$.
The instance $\psi_3$ is equivalent to
$3 \cdot ((a-2)\ \opdivu \ 3) + 2 \not\teq a$ and
implies that $a^{\I} \not\equiv_3 2$, which together with the 
two previous instances are $T$-unsatisfiable.
\qed
\end{example}

The procedure $\mathcal{P}_{\procslia}$ can be understood to lazily 
enumerating disjuncts of Cooper's algorithm for quantifier elimination over linear integer arithmetic~\cite{cooper1972},
with minor differences.
\begin{longv}
The algorithm is essentially enumerating a single path of \cite{cooper1972} by using the model to select a satisfied case split for each variable over an entire block of quantifiers.\footnote{
In the parlance of~\cite{cooper1972}, $\mathcal{P}_{\procslia}$ selects a feasible $j$ value using the calculation of $\rho$ 
and avoids introducing the $F_{\pm\infty}$ cases by introducing the no bounds case ($n = 0, m=0$) and always favoring bounds when one exists.
  }
\end{longv}
Like that approach,
the worst-case performance is dependent upon the size of coefficients of monomials,
which is manifested in our case by the fact that the number of possible return values of $\procsliahh$ is proportional to the size of $\theta$.
While not shown here, our implementation takes steps to reduce the size of $\theta$
by factoring out common divisors in $\theta$ and the coefficients returned by $\procsliahh$.

\begin{longv}
\subsection{Comparison to Existing Approaches}
The approach taken in $\mathcal{P}_{\procslia}$ is similar to the one taken in
Section~2.5 in \cite{bjornerplaying}. The most substantial difference between the two algorithms is that
$\mathcal{P}_{\procslia}$ implements a variant of Cooper's algorithm while $\mathit{resolve}$ in~\cite{bjornerplaying} uses the model to guide an execution of the Omega test~\cite{Pugh91OmegaTest}.
The most similar aspects of the approaches are the computation of a feasible $\rho$ and the computation of the $d$ values in the \emph{grey} shadow cases of $\mathit{resolve}$.
These differ in that a different $d$ value is selected to ensure separation between each upper bound and the greatest lower bound in a \emph{projection} whereas $\rho$ is selected using the current value of $c_i \cdot e$ (the selection of $d$ is agnostic to $e$ in our parlance) to ensure all bounds are satisfied by a single instantiation.

\end{longv}

\section{Integration in an SMT Solver}
\label{sec:smt}

This section gives an overview of
how the instantiation-based procedure for quantified formulas as described in Section~\ref{sec:qi}
can be integrated into a solving architecture used by SMT solvers, and
used in part for determining the satisfiability of inputs $\Gamma_0$
having arbitrary Boolean structure that contain any number of quantified $T$-formulas.

\begin{figure}[t]
\begin{framed}
$\smtsolve( T, \Gamma )$:
\begin{itemize}
\item[\ ] If $\Gamma$ is $T$-unsatisfiable, then return ``unsat".
\item[\ ] Otherwise, 
\begin{itemize}
\item[\ ] Let $M$ be such that $\I \models M$ and $M \models_p \Gamma$ for some model $\I$ of $T$ and $\Gamma$.
\item[\ ] $\Gamma' := \Gamma$.
\item[\ ] For each $( \neg ) A_i \in M$ where $A_i \Leftrightarrow \forall \vec{x}. \varphi_i[ \vec{x} ]$,
\begin{itemize}
\item[\ ] $\Gamma' := \Gamma' \cup ( A_i \vee B_i ) \cup ( B_i \Rightarrow \lnot \varphi_i[ \vec{e}_i ] )$.
\item[\ ] If $A_i \in M$ and $B_i \in M$,
\begin{itemize}
\item[\ ] $\Gamma' := \Gamma' \cup ( A_i \Rightarrow \varphi_i[ \mathcal{S}_i( \I, \Gamma, \lnot \varphi_i[ \vec{e}_i ] ) ] )$
\end{itemize}
\end{itemize}
\item[\ ] If $\Gamma' = \Gamma$, then return ``sat". Otherwise, return $\smtsolve( T, \Gamma' )$.
\end{itemize}
\end{itemize}
\end{framed}
\vspace{-2ex}
\caption{Procedure $\smtsolve$ for SMT solving with quantifier instantiation, 
which determines the $T$-satisfiability of a ground set of $T$-formulas $\Gamma$ in purified form with respect to quantified formulas.
\label{fig:proc-smt-qi-solve}}
\end{figure}

Figure~\ref{fig:proc-smt-qi-solve} defines a procedure $\smtsolve$ which takes as input a theory $T$ 
and a set of ground $T$-formulas $\Gamma$ in \emph{purified form} with respect to quantified formulas, that is,
a formula obtained from $\Gamma_0$ 
by replacing all quantified formulas $\forall \vec{x}. \varphi_i[ \vec{x} ]$ in $\Gamma_0$ by (uniquely associated) boolean variables $A_i$.
We call such a variable $A_i$ the \emph{positive guard} of $\forall \vec{x}. \varphi_i[ \vec{x} ]$
and write $A_i \Leftrightarrow \forall \vec{x}. \varphi_i[ \vec{x} ]$
to denote $A_i$ is the positive guard of $\forall \vec{x}. \varphi_i[ \vec{x} ]$.
In each call to $\smtsolve$,
if $\Gamma$ is $T$-unsatisfiable, then the procedure returns ``unsat".
Otherwise, we find a model $\I$ of $T$ and $\Gamma$ and a corresponding propositionally satisfying assignment $M$.
We then build a new set of formulas $\Gamma'$, initially containing $\Gamma$, as follows.
For each quantified formula $\forall \vec{x}. \varphi_i[ \vec{x} ]$ whose positive guard $A_i$ is in $M$,
we add the formulas $( A_i \vee B_i )$ and $( B_i \Rightarrow \lnot \varphi_i[ \vec{e}_i ] )$ to $\Gamma'$ if we have not done so already,
where $B_i$ is a fresh Boolean variable, which we call the \emph{negative guard} of $\forall \vec{x}. \varphi_i[ \vec{x} ]$.
If both the positive and negative guards of $\forall \vec{x}. \varphi_i[ \vec{x} ]$ are asserted positively in $M$ 
(we will say such a quantified formula is \emph{active in $M$}),
we consider an instance of this quantified formula to $\Gamma'$ based on its associated selection function $\mathcal{S}_i$,
If $\varphi_i$ has nested quantification, this instance will contain quantified formulas, 
which we purify in the same manner described above.
If no new formulas are added to $\Gamma'$ in this process, then the procedure returns ``sat".
Otherwise, we call $\smtsolve$ on $\Gamma'$.

At a high level,
the procedure in Figure~\ref{fig:proc-smt-qi-solve} 
adds guarded instances of quantified formulas to an evolving set of formulas $\Gamma$ 
until $\Gamma$ is $T$-unsatisfiable, or a fixed point is reached.
In this respect,
the algorithm is similar to existing instantiation-based approaches used by SMT solvers for 
quantified formulas~\cite{MouraBjoerner07EfficientEmatchingSmtSolvers,GeBarrettTinelli07SolvingQuantifiedVerificationConditionsUsingSatisfiability}.
However, the procedure differs from these approaches in the following ways.
Firstly, when a universally quantified formula are asserted negatively in $M$,
typical approaches add the clause $( \neg A_i \Rightarrow \lnot \varphi_i[ \vec{e}_i ] )$ to $\Gamma$.
Here, we choose to add the clauses
$( A_i \vee B_i )$ and $( B_i \Rightarrow \lnot \varphi_i[ \vec{e}_i ] )$ instead.
This allow us to consider both the positive and negative versions of quantified formulas simultaneously.
As such, the algorithm only adds instances of quantified formulas 
where \emph{both} the positive and negative guards are asserted positively in $M$.
\begin{longv}
To ensure the model soundness of the approach (that is, the algorithm answers ``sat" only if the input is indeed $T$-satisfiable),
we require the following property of the set of literals $M$ in the body of $\smtsolve$:
\begin{align}
\text{If no quantified formula is active in $M$, then $\Gamma \cup \{ B_i \}$ is $T$-unsat for $i = 1, \ldots, n$,}\nonumber \\
\text{where $\{ A_1, \ldots, A_n \}$ is the set of the positive guards that are asserted positively in $M$.}\nonumber
\end{align}
In other words, 
the set of literals $M$ are chosen such that,
if possible, at least one of $B_1, \ldots, B_n$ is true in $M$.
In practice, this requirement can be met in a DPLL(T)-based SMT solver
by instructing its underlying SAT solver, when it chooses a decision literal to add to $M$, to choose
one of the unassigned negative guards of quantified formulas in $\Gamma$, if one exists, and assert it positively.

We refer to the treatment of quantified formulas in Figure~\ref{fig:proc-smt-qi-solve} as \emph{counterexample-guided quantifier instantiation}~\cite{ReynoldsDKBT15Cav}.
A closely related approach is that of model-based quantifier instantiation~\cite{GeDeM-CAV-09},
which like the approach described here,
adds instances of quantified formulas based on models for their negations.
This approach differs in its scope,
in that it primarily targets quantified formulas having uninterpreted functions,
whereas the approach described in Figure~\ref{fig:proc-smt-qi-solve} targets quantified formulas having no uninterpreted functions.
It also differs in that it uses a separate copy of the SMT solver as an oracle for checking the satisfiability of the negation of each quantified formula it instantiates,
whereas the approach described in Figure~\ref{fig:proc-smt-qi-solve} uses a single instance of the SMT solver for doing these tasks simultaneously in its main solving loop.

\subsection{Arbitrary Quantifier Alternation}
\label{sec:nested-quant}

The algorithm in Figure~\ref{fig:proc-smt-qi-solve}
can be used as a basis for handling quantified formulas with arbitrary quantifier alternations.
Assume we rewrite formulas added to $\Gamma'$ in the body $\smtsolve$ so they are in purified form with respect to quantified formulas,
that is, we replace each quantified formula $\forall \vec{x}. \varphi_j[ \vec{x} ]$ occurring in formulas 
$B_i \Rightarrow \lnot \varphi_i[ \vec{e}_i ]$ and $A_i \Rightarrow \varphi_i[ \mathcal{S}_i( \I, \Gamma, \vec{e}_i ) ]$ with its corresponding positive guard $A_j$.
We give an intuition of how the procedure $\smtsolve$ handles such formulas in the following.
A more comprehensive description is the subject of future work.

Consider the $\lia$-formula $\forall x. \neg ( \forall y. x > y )$, call it $\varphi_1$, whose positive guard is $A_1$.
Given the input $\Gamma = \{ A_1 \}$, the procedure $\smtsolve$
adds the formulas $A_1 \vee B_1$ and $B_1 \Rightarrow A_2$ to $\Gamma$,
where $A_2$ is the positive guard for $(\forall y. e_1 > y)$ (call this formula $\varphi_2$) where $e_1$ is a fresh constant.
On the second call to $\smtsolve$, 
the satisfying assignment includes $A_2$ and
the procedure similarly
adds the formulas $A_2 \vee B_2$ and $B_2 \Rightarrow e_1 \leq e_2$ to $\Gamma$ where $e_2$ is a fresh constant.
On the third call to $\smtsolve$, 
we have that $\Gamma = \{ A_1, B_1 \Rightarrow A_2, B_2 \Rightarrow e_1 \leq e_2, \ldots \}$ is $T$-satisfiable,
and we may choose a satisfying assignment $M = \{ A_1, B_1, A_2, B_2, e_1 \leq e_2, \ldots \}$.
Both $\varphi_1$ and $\varphi_2$ are active in $M$.
The literal $e_1 \leq e_2$ is over the atoms of $\varphi_2[e_2/y]$, 
and we may add the formula $A_2 \Rightarrow e_1 > e_1$ to $\Gamma$ on this iteration,
assuming our selection function for $\varphi_2$ chose to return the maximal lower bound $e_1$ for $e_2$.
On the fourth call to $\smtsolve$, 
we have that $\Gamma = \{ A_1, B_1 \Rightarrow A_2, A_2 \Rightarrow e_1 > e_1, \ldots \}$, and
hence $B_1$ or $A_2$ cannot be asserted positively in a satisfying assignment $M$ for this set.
Hence, neither $\varphi_1$ nor $\varphi_2$ is active in $M$ and
the procedure $\smtsolve$ adds no instances to $\Gamma$, indicating that our input is satisfiable.
\end{longv}
\begin{shortv}
Although not shown here, we require that the $M$ in Figure~\ref{fig:proc-smt-qi-solve} is chosen such that,
whenever possible, at least one quantified formula is active in $M$.

We refer to the treatment of quantified formulas in Figure~\ref{fig:proc-smt-qi-solve} as \emph{counterexample-guided quantifier instantiation}~\cite{ReynoldsDKBT15Cav}.
A related approach is that of model-based quantifier instantiation~\cite{GeDeM-CAV-09},
which like the approach described here,
adds instances of quantified formulas based on models for their negations.
This approach differs in its scope,
in that it targets quantified formulas having uninterpreted functions,
whereas the approach described in Figure~\ref{fig:proc-smt-qi-solve} targets quantified formulas having no uninterpreted functions.
It also differs in that it uses a separate copy of the SMT solver as an oracle for checking the satisfiability of the negation of each quantified formula it instantiates,
whereas the approach described in Figure~\ref{fig:proc-smt-qi-solve} uses a single instance of the SMT solver for doing these tasks simultaneously in its main solving loop.
\end{shortv}

\section{Experimental Evaluation}
\label{sec:results}

We have implemented the procedure in the SMT solver \cvc~\cite{CVC4-CAV-11} (version 1.5 pre-release).
This section presents an evaluation of this implementation compared against other SMT solvers
and first-order theorem provers.

\paragraph{Pure Quantified Linear Arithmetic}
We considered all quantified benchmarks over 6 classes in the $\lra$ and $\lia$ logics of the SMT library~\cite{BarST-SMTLIB}.
The class 
{\bf keymaera} are verification conditions coming from the Keymaera verification tool~\cite{platzer2009real},
{\bf scholl} were used for simplification of non-convex polyhedra in~\cite{scholl2008using},
{\bf psyco} were used for weakest precondition synthesis for compiler optimizations in~\cite{LopesM14},
{\bf uauto} correspond to verification conditions in~\cite{HeizmannTACAS2015},
and the {\bf tptp} classes correspond to simple arithmetic conjectures coming from the TPTP library~\cite{SS98}.
\begin{longv}
We also considered a class of benchmarks {\bf sygus} corresponding to 
first-order formulations of the 71 single-invocation synthesis conjectures taken from 
the conditional linear integer track of the 2015 edition of the syntax-guided synthesis competition~\cite{AlurETAL2014SyGuSMarktoberdorf}.
\end{longv}
\begin{shortv}
We also considered a class of benchmarks {\bf sygus} corresponding to first-order formulations of conjectures taken from 
the conditional linear integer track of the 2015 edition of the syntax-guided synthesis competition~\cite{AlurETAL2014SyGuSMarktoberdorf}.
\end{shortv}
\begin{longv}
All benchmarks are in the SMT version 2 format.
For comparisons with automated theorem provers, they were converted to the TPTP format by the SMTtoTPTP conversion tool~\cite{SMTtoTPTP2015}.
\end{longv}
We remark that all benchmarks
consist purely of quantified formulas over linear arithmetic with very little, and in a majority of cases, no quantifier-free content.
\footnote{
\begin{longv}
Details can be found at {\scriptsize \url{http://cs.uiowa.edu/~ajreynol/InstLA}}.
\end{longv}
\begin{shortv}
Details can be found at {\scriptsize \url{http://cs.uiowa.edu/~ajreynol/IJCAR2016-InstLA}}.
\end{shortv}
}

\begin{figure}[t]
\centering
{
\begin{tabular}{|l|cc|cc|cc|cc|}
\hline                                                                
  & \multicolumn{2}{c|}{{\bf keymaera} (222)}   & \multicolumn{2}{c|}{{\bf scholl} (371)}     
  & \multicolumn{2}{c|}{{\bf tptp} (25)}   & \multicolumn{2}{c|}{{\bf Total} (621)}    
\\                                                                
  & \#  & time  & \#  & time  & \#  & time  & \#  & time
\\                                                                
\hline                                                                
{\bf  CVC4(a)  } & \bf 222 & 2.0 & \bf 352 & 2176.4  & \bf 25  & 0.2 & \bf 599 & 2178.6  \\
{\bf  CVC4  } & \bf 222 & 2.0 & 351 & 1074.0  & \bf 25  & 0.2 & 598 & 1076.2  \\
{\bf  Z3  } & \bf 222 & 2.4 & 326 & 553.0 & \bf 25  & 0.4 & 573 & 555.8 \\
{\bf  VampireZ3 } & 220 & 51.2  & 57  & 393.2 & \bf 25  & 2.3 & 302 & 446.7 \\
{\bf  Beagle  } & \bf 222 & 377.9 & 53  & 577.9 & \bf 25  & 29.7  & 300 & 985.5 \\
{\bf  Vampire } & 218 & 57.8  & 43  & 31.1  & \bf 25  & 1.3 & 286 & 90.1  \\
{\bf  Yices } & \bf 222 & 0.9 & -- & 0.0 & \bf 25  & 0.01  & 247 & 1.0 \\
{\bf  ZenonArith  } & 205 & 13.8  & 25  & 452.9 & 14  & 0.9 & 244 & 467.7 \\
{\bf  Princess  } & 202 & 1136.2  & 0 & 0.0 & \bf 25  & 67.4  & 227 & 1203.6  \\                                     
\hline                                                                
\end{tabular}
\\
}
\caption{Results for $\lra$ benchmarks, showing times 
(in seconds) and benchmarks solved
by each solver and configuration over 3 benchmark classes with a 300s timeout.
Yices (version 2.4.1) does not support nested quantification,
hence it was not applicable for the scholl class.}
\label{fig:results-lra}
\end{figure}

\begin{figure}[t]
\centering
{
\begin{tabular}{|l|cc|cc|cc|cc|cc|}
\hline                                                                
  & \multicolumn{2}{c|}{{\bf psyco} (189)}   & \multicolumn{2}{c|}{{\bf tptp} (46)}     
  & \multicolumn{2}{c|}{{\bf uauto} (155)}      & \multicolumn{2}{c|}{{\bf sygus} (71)}    & \multicolumn{2}{c|}{{\bf Total} (461)}    
\\                                                                
  & \#  & time  & \#  & time  & \#  & time  & \#  & time & \#  & time 
\\                                                                
\hline                                                                
{\bf  CVC4  } & \bf 189 & 78.7  & \bf 46  & 0.4 & \bf 155 & 1.9 & \bf 71  & 22.0  & \bf 461 & 103.0 \\
{\bf  Z3  } & 183 & 32.1  & \bf 46  & 0.7 & \bf 155 & 1.8 & \bf 71  & 19.0  & 455 & 53.6  \\
{\bf  Beagle  } & 28  & 900.0 & \bf 46  & 48.4  & 153 & 343.6 & 57  & 617.7 & 284 & 1909.7  \\
{\bf  Princess  } & 13  & 513.4 & \bf 46  & 48.0  & \bf 155 & 201.9 & 68  & 418.8 & 282 & 1182.1  \\
{\bf  VampireZ3 } & 4 & 3.1 & 36  & 4.7 & \bf 155 & 106.3 & 55  & 151.8 & 250 & 265.9 \\
{\bf  Vampire } & 6 & 196.0 & 36  & 2.0 & \bf 155 & 378.0 & 46  & 262.8 & 243 & 838.7 \\
{\bf  ZenonArith  } & 0 & 0.0 & 30  & 1.9 & 154 & 15.0  & 28  & 1374.6  & 212 & 1391.5  \\                                                      
\hline                                                                
\end{tabular}
\\
}
\caption{Results for $\lia$ benchmarks, showing times 
(in seconds) and benchmarks solved
by each solver and configuration over 4 benchmark classes with a 300s timeout.}
\label{fig:results-lia}
\end{figure}

The results for the linear real and integer benchmarks are in Figures~\ref{fig:results-lra} and~\ref{fig:results-lia} respectively.
Of the 7 benchmark classes,
only one (the {\bf scholl} class from $\lra$) had quantified formulas with nested quantification.
The algorithm in Section~\ref{sec:smt} naturally extends to such formulas;
a formal treatment of nested quantification is the subject of future work.

For $\lra$,
we considered both the selection function from Figure~\ref{fig:sel-lra}, and its alternative from Figure~\ref{fig:sel-lra2},
where the latter we refer to as \cvc(a).
For both $\lra$ and $\lia$, the best configuration of \cvc solves the most benchmarks overall (599 and 461 respectively),
and did not give a conflicting response with any of the other solvers.
Moreover, we note that \cvc solves \emph{every} benchmark that does not involve nested quantification,
giving confirming evidence that our approach and implementation for solving
linear arithmetic with one quantifier alternation is indeed sound and complete.
Although we do not claim completeness for formulas with nested quantification,
\cvc solves more benchmarks (352) from the {\bf scholl} class than any other solver.

The SMT solver \ziii (version 4.3.2), which uses the approach described in~\cite{Bjoerner10LinearQuantifierEliminationAsAbstractDecision},
solves the next most benchmarks overall, 
solving 573 and 455 total for the $\lra$ and $\lia$ sets respectively.
A technique~\cite{dutertresolving} in the SMT solver Yices (version 2.4.1)
is able to solve all benchmarks from the {\bf keymaera} and {\bf tptp} classes of $\lra$,
both does not handle quantified formulas in $\lia$ or with nested quantification. 
We also considered the entrants of the first-order typed theorems division (TFA) of CASC 25,
the most recent competition for automated theorem provers~\cite{sutcliffe20157th}.
\begin{longv}
\footnote{
We omit SPASS+T, which did not handle some classes of benchmarks due to restrictions on its input format,
was comparable to the other automated theorem provers for the others. 
We show results for an updated version of Beagle (version 0.9.30).}\end{longv}
For both benchmarks over $\lia$ and $\lra$, the automated theorem provers trail the performance of \cvc (and \ziii) significantly.
The best $\lra$ automated theorem prover, VampireZ3, which uses a combination of a first-order theorem prover and an SMT solver~\cite{reger2015playing}, 
solves only 302 benchmarks, compared to 599 solved by \cvc(a).
The best $\lia$ automated theorem prover, Beagle~\cite{baumgartner2015}, solves 284 benchmarks, also notably less than the 461 solved by \cvc.

When comparing the best configuration of \cvc to a combination of all other solvers, 
\cvc solved 31 benchmarks that no other system solved, 
while in only 10 cases did another system solve a benchmark that \cvc could not solve.
In addition to solving the most benchmarks, \cvc generally has small runtimes for the benchmarks it solves.
\begin{longv}
When compared to the second best solver \ziii, which takes 1129.8 seconds to solve 1028 benchmarks over all classes,
\cvc while solving 1060 benchmarks overall, solves its first 1028 benchmarks in a total of 159.0 seconds.
\end{longv}
Among the benchmarks solved by \cvc and other systems, 
in only 12 cases did any system solve a benchmark at least 5 seconds faster than \cvc,
while in 13 cases \cvc solved a benchmark at least 5 seconds faster than all other systems.

\paragraph{Combining Linear Arithmetic with Uninterpreted Functions}
A prototype of the instantiation-based procedure from this paper was used by \cvc
in both the CASC J7 and CASC 25 competitions~\cite{sutcliffe20157th}, which evaluated automated theorem provers on
TPTP benchmarks involving combinations of arithmetic and free function symbols.
\cvc won the theorems (TFA) division of CASC J7 and finished 2$^{nd}$ in CASC 25, behind VampireZ3.
Additionally, \cvc won the non-theorems (TFN) division of CASC 25.
While treatment of uninterpreted functions is beyond the scope of this work,
this shows the potential of the technique for use in general first-order automated theorem proving in the presence of background theories.


\section{Conclusion}
\label{sec:conclusion}

We have presented a class of instantiation-based procedures that 
are at the same time complete for quantified linear arithmetic and highly efficient in practice. 
Thanks to our framework we also obtain a simple
and modular correctness argument for soundness and completeness on formulas with one quantifier alternation.

For future work, 
we would like to adapt the approach for quantified linear arithmetic with
arbitrary quantifier alternations, and develop heuristics for avoiding
worst case performance for quantified integer arithmetic involving large coefficients.
We plan to develop selection functions for other theories,
in particular, algebraic datatypes and fixed-width bitvectors,
as well as for combinations of theories that admit quantifier elimination.
A longer term goal of this work is to develop an approach that is effective in practice for
quantified formulas involving both background theories and uninterpreted functions.
We plan to investigate the use of the framework described in this paper 
as a component of such an approach.

\paragraph{Acknowledgements}
We would like to thank Peter Baumgartner for his help with converting the benchmarks used in the evaluation
to the TPTP format.

{
\bibliographystyle{abbrv}
\bibliography{main}
}

\end{document}